\numberwithin{equation}{section}
\newenvironment{remark}{\textit{Remark:}}{}
\newtheorem{thm}{Theorem}                                
\newtheorem{lemma}[thm]{Lemma}
\newcommand{\e}{\mathrm{e}}    
\newcommand{\ii}{{\mathrm{i}}}        		
\newcommand{\hlf}{\tfrac{1}{2}}
\newcommand{\half}{\frac{1}{2}}
\newcommand{\rtarr}{\rightarrow}
\newcommand{\del}{\mbox{\boldmath $\nabla$\unboldmath}}
\newcommand{\lap}{\nabla^2}
\newcommand{\reals}{\ensuremath{\mathbb{R}}}
\newcommand{\bigO}{\ensuremath{{\mathcal{O}}}}
\newcommand{\Real}[1]{\text{Re}[#1]}
\newcommand{\Ltwo}{\mathrm{L}^2}
\newcommand{\vecfield}[1]{\mathbf{#1}}
\newcommand{\bu}{\mathbf{u}}
\newcommand{\bv}{\mathbf{v}}
\newcommand{\bU}{\mathbf{U}}
\newcommand{\bk}{\mathbf{k}}
\newcommand{\bx}{\mathbf{x}}
\newcommand{\bJ}{\mathbf{J}}
\newcommand{\btau}{\mathbf{\tau}}
\newcommand{\bO}{\mathbf{0}}
\newcommand{\conj}[1]{{#1}^*}        
\newcommand{\Tub}{A_u^*}
\newcommand{\Tlb}{A_l^*}
\newcommand{\Tt}{T_t^*}
\newcommand{\Tb}{T_b^*}
\newcommand{\Tr}{T_{\text{ref}}}
\newcommand{\dT}{\Delta T}    
\newcommand{\dTf}{\Delta T^*}
\newcommand{\dtheta}{\Delta \theta}
\newcommand{\dtau}{\Delta \tau}
\newcommand{\Tscal}{\Theta}
\newcommand{\fluxdim}{\Phi}
\newcommand{\dzT}{\beta}
\newcommand{\dztau}{\gamma}
\newcommand{\kf}{\kappa_f}
\newcommand{\nuf}{\nu_f}        
\newcommand{\rhof}{\rho_f}
\newcommand{\cpf}{c_{p,f}}      
\newcommand{\kcof}{\lambda_f}    
\newcommand{\lameig}{\lambda}  
\newcommand{\muwt}{\mu}
\newcommand{\Ra}{\mathit{Ra}}
\newcommand{\Nu}{\mathit{Nu}}
\newcommand{\Prandtl}{\mathit{Pr}}
\newcommand{\R}{R}
\newcommand{\Reff}{R_{\mathrm{e}}}
\newcommand{\Bi}{\eta}
\newcommand{\unitz}{{\bf e_z}}
\newcommand{\unitn}{{\bf n}}
\newcommand{\wh}{\hat{w}}
\newcommand{\whk}{\hat{w}_{\bk}}
\newcommand{\thk}{\hat{\theta}_{\bk}}
\newcommand{\taud}{\tau_{\delta}}
\newcommand{\taubfunc}{[\tau;b]}
\newcommand{\taudbfunc}{[\taud;b]}
\newcommand{\Qcal}{\mathcal{Q}}
\newcommand{\Qcalt}{\Qcal'}
\newcommand{\Qtb}{\Qcal_{\tau,\Reff}}
\newcommand{\Qtiltb}{\Qcalt_{\tau,\Reff}}
\newcommand{\Qk}{\Qcal_{\bk}}
\newcommand{\Qkt}{\Qcal'_{\bk}}
\newcommand{\Qktb}{\Qcal_{\bk; \tau, \Reff}}
\newcommand{\Qkttb}{\Qcal'_{\bk; \tau, \Reff}}
\newcommand{\ELop}{\mathcal{L}_{\tau,\Reff}}
\newcommand{\BdzT}{\mathcal{B}}
\newcommand{\BdT}{\mathcal{D}}
\newcommand{\NuboundR}{\tilde{\mathcal{N}}(\R)}
\newcommand{\Nubnd}{\mathcal{N}}
\newcommand{\BpwldzT}[1]{\mathcal{B}_{\mathrm{pwl},#1}}
\newcommand{\BpwldT}[1]{\mathcal{D}_{\mathrm{pwl},#1}}
\newcommand{\Nupwl}[1]{\mathcal{N}_{\mathrm{pwl},#1}}
\newcommand{\cinf}{c_{\infty}}
\newcommand{\cBi}{c_{\Bi}}
\newcommand{\dsuff}{\delta_{\mathrm{s}}}
\newcommand{\bproof}{b_{\mathrm{c}}}
\newcommand{\badm}{b_{\mathrm{s}}}
\newcommand{\Rsuff}{\R_{\mathrm{s}}}
\newcommand{\Reffs}{\tilde{\Reff}}
\newcommand{\aveT}{T_{a}}
\newcommand{\avetau}{\tau_{a}}
\newcommand{\pdt}[1]{\frac{\partial #1}{\partial t}}
\newcommand{\have}[1]{\overline{#1}}
\newcommand{\tave}[1]{\langle #1 \rangle}
\newcommand{\ltave}[1]{\left\langle #1 \right\rangle}
\newcommand{\zlim}[1]{\left. #1 \right|_{z=0}^1}
\newcommand{\Ltf}[1]{\left\| #1 \right\|}
\newcommand{\intf}{\int} 
\newcommand{\intdf}[1]{\int_{{\mathcal{I}}} #1 \, dS}
\newcommand{\intO}{\int_{\Omega}} 
\newcommand{\intdO}{\int_{\partial \Omega}}
\newcommand{\D}{\mathrm{D}}      
\begin{document}

\title[Bounds for convection with fixed Biot number boundaries]{Bounds
  on Rayleigh-B\'enard convection with general thermal boundary
  conditions. Part 1. Fixed Biot number boundaries}


\author[Ralf Wittenberg]{Ralf W. Wittenberg}
\affiliation{Department of Mathematics, Simon Fraser University,
  Burnaby, BC V5A 1S6, Canada}
\date{\today}

\maketitle

\begin{abstract}
  We investigate the influence of the thermal properties of the
  boundaries in turbulent Rayleigh-B\'enard convection on analytical
  bounds on convective heat transport.  Using the Doering-Constantin
  background flow method, we systematically formulate a bounding
  principle on the Nusselt-Rayleigh number relationship for general
  mixed thermal boundary conditions of constant Biot number $\Bi$
  which continuously interpolates between the previously studied fixed
  temperature ($\Bi = 0$) and fixed flux ($\Bi = \infty$) cases, and
  derive explicit asymptotic and rigorous bounds.  Introducing a
  control parameter $\R$ as a measure of the driving which is in
  general different from the usual Rayleigh number $\Ra$, we find that
  for each $\Bi > 0$, as $\R$ increases the bound on the Nusselt
  number $\Nu$ approaches that for the fixed flux problem.
  Specifically, for $0 < \Bi \leq \infty$ and for sufficiently large
  $\R$ ($\R > \Rsuff = \bigO(\Bi^{-2})$ for small $\Bi$) the Nusselt
  number is bounded as $\Nu \leq c(\Bi) \R^{1/3} \leq C \Ra^{1/2}$,
  where $C$ is an $\Bi$-independent constant.  In the $\R \to \infty$
  limit, the usual fixed temperature assumption is thus a singular
  limit of this general bounding problem.
\end{abstract}


\section{Introduction}
\label{sec:intro}

Rayleigh-B\'enard convection, in which a fluid layer sandwiched
between two horizontal plates is heated from below, has long attracted
considerable experimental and theoretical attention; this is due not
only to its importance as a model with numerous applications in
engineering, geophysics, astrophysics and elsewhere, but also because
it has proved such fertile ground for explorations concerning
stability and dynamics, pattern formation and---under sufficient
heating---convective turbulence (\cite{CrHo93,Kada01}).  Particular
attention has been given to the Nusselt number $\Nu$, representing the
convective enhancement of vertical heat transport, and its dependence
on the Rayleigh number $\Ra$, a measure of the driving via the
temperature difference across the fluid layer.  This dependence
appears to take a scaling form $\Nu \sim f(\Prandtl, \Gamma) \Ra^p$
(with possible logarithmic corrections), where $\Gamma$ represents
geometric effects and $\Prandtl$ is the Prandtl number, and a major
goal of theory and experiment is to find $p$.

Observations that the heat transfer is essentially confined to thermal
boundary layers near the plates, separated by an isothermal core,
suggest that $p = 1/3$, as proposed already by \cite{Malk54}; this
prediction appears to be consistent with large-$\Ra$ experiments
(\cite{NiSr06}) and numerical simulations of turbulent
Boussinesq convection (\cite{AKMSV05}).  In some experiments
(\cite{CCCHCC97,CCCCH01}) an increase in the scaling exponent was
observed at the highest accessible $\Ra$ values, suggesting a possible
transition to a $p = 1/2$ asymptotic regime predicted by
\cite{Krai62}, but other experiments at comparable $\Ra$ have failed
to observe such a transition (\cite{GSNS99,Somm99,NSSD00}), and
possible Prandtl number variabilities or non-Boussinesq effects may
have played a role (\cite{GSNS99,NiSr06b,NiSr06}); strong evidence of
this so-called ``ultimate'' regime appears to be absent.

\subsubsection{Analytical bounds on heat transport:}
\label{sssec:introanalytical}

While good agreement with experimentally observed scaling for a range
of Rayleigh and Prandtl numbers has been attained by a
phenomenological theory (\cite{GrLo01}), we focus on
results derived systematically from the underlying governing
differential equations.  Upper bounding principles derived under some
statistical assumptions, dating to the work of \cite{Howa63} and
\cite{Buss69}, have yielded Kraichan's exponent $p = 1/2$.  More
recently, \cite{DoCo92,DoCo96} realized that an idea of decomposing
flow variables into ``background'' and ``fluctuating'' components,
introduced by \cite{Hopf41} to prove energy boundedness, could be
extended to obtain rigorous analytical bounds (without additional
assumptions) on bulk transport quantities.  The Doering-Constantin
``background flow'' method has since proved remarkably fruitful in
obtaining bounds in good quantitative agreement with experiment or
direct numerical simulation in a wide range of flows.

For Rayleigh-B\'enard convection with fixed temperatures at the lower
and upper boundaries of the fluid, the background flow method yields a
rigorous bound $\Nu \leq C_0 \, \Ra^{1/2}$ uniform in Prandtl number
$\Prandtl$ (\cite{DoCo96}), and while extensive subsequent
investigations (\cite{Kers97,Kers01,PlKe03}) have improved and
optimized the constant $C_0$ in the bound, for general $\Prandtl$ it
has to date only proved possible to lower the exponent $p$ from the
Kraichnan value $p = 1/2$ under additional length scale or regularity
assumptions (\cite{CoDo96,Kers01}).  The assumption of infinite
Prandtl number, however, by imposing an additional constraint on the
velocity field, permits a lowering of the scaling exponent to $p =
1/3$ (with possible logarithmic corrections)
(\cite{Chan71,CoDo99,IKP06}); the best current rigorous bound in this
case has the form $\Nu \leq C (\ln \Ra)^{1/3} \Ra^{1/3}$
(\cite{DOR06}), and related results have recently been obtained by
\cite{Wang08b} for sufficiently large finite Prandtl number.

\subsubsection{Influence of thermal properties of the plates:}
\label{sssec:introthermal}

The above analyses were performed under the usual assumption that the
lower and upper boundaries of the fluid in Rayleigh-B\'enard
convection are held at known uniform temperature, or equivalently,
that the bounding plates are perfect conductors.  In practise, though,
the boundaries are imperfectly conducting; and the thermal properties
of the boundaries have long been understood to affect the initial
instability to convection and the weakly nonlinear behaviour beyond
transition (see for instance \cite{SGJ64,HJP67,ChPr80,BuRi80} and the
review in \cite{CrHo93}).  Even when the bounding plates have much
higher conductivity than the fluid, as the Rayleigh number (and hence
the Nusselt number) increases, the \emph{effective} conductivity of
the fluid, depending on $\Nu$, eventually becomes comparable to and
then exceeds that of the plates; in fact, in the $\Ra \to \infty$
limit one might expect the fluid effectively to ``short circuit'' the
system, with the bounding plates acting essentially as perfect
insulators by comparison.

The effect of the finite thermal conductivity of the bounding plates
on convective heat transport has stimulated recent modelling
(\cite{CCC02,CRCC04}), experimental (\cite{BNFA05}) and numerical
(\cite{Verz04}) studies with the aim of reconciling various
experimental results with each other and with fixed-temperature
theoretical (\cite{GrLo01}) and numerical (\cite{AKMSV05})
predictions.  Recent numerical simulations comparing fixed flux and
fixed temperature boundary conditions have reached differing
conclusions: The computations of \cite{VeSr08} in cylindrical geometry
found $\Nu \sim \Ra^{1/3}$ scaling, but that for a given large enough
$\Ra$, the Nusselt number is reduced upon replacement of fixed
temperature conditions at the lower boundary of the fluid by fixed
flux conditions.  \cite{JoDo07,JoDo08}, on the other hand, found that
in numerical integration of two-dimensional, horizontally periodic
convection, the heat transport for large $\Ra$ was the same, namely
$\Nu \sim \Ra^{2/7}$, for fixed temperature and fixed flux conditions
at the upper and lower boundaries of the fluid.

Predating most of the above recent investigations and with similar
motivations, \cite{OWWD02} initiated the analytical study of the
effects of thermal boundary conditions on Rayleigh-B\'enard convection
using the background flow method, obtaining a bound on the heat
transport with fixed flux boundary conditions at the fluid boundaries,
which again took the form $\Nu \leq C_{\infty} \Ra^{1/2}$ (this work
was recently extended to porous medium convection by \cite{Wei07}).
However, the mathematical structure of the fixed flux bounding problem
of \cite{OWWD02}, and various intermediate scaling results, turned out
to be quite different from the fixed temperature case
(\cite{DoCo96,Kers01}).  It is thus natural to wonder how these two
extreme cases, corresponding respectively to the idealizations of
perfectly conducting and insulating plates, are related
\textit{vis-\`a-vis} their bounding problems, and which is more
relevant to real, finitely conducting boundaries.

\subsubsection{Outline of this paper:}
\label{sssec:introoutline}

In the present work we reconsider the effect of general thermal
boundary conditions on systematically derived analytical bounds on
thermal convection, continuing the program initiated by \cite{OWWD02};
for simplicity we consider only identical thermal properties at the
top and bottom fluid boundaries in the mathematically idealized
horizontally periodic case.  We consider a common model for poorly
conducting plates, namely mixed (Robin) thermal boundary conditions of
``Newton's Law of Heating'' type, with a fixed Biot number $\Bi$, so
that $\Bi = 0$ gives the fixed temperature and $\Bi = \infty$ the
fixed flux case; to our knowledge the only prior bounding study with
general Biot number is the work of \cite{SKB04} on horizontal
convection, in which mixed thermal conditions were imposed at the
lower boundary of the fluid.

In Section~\ref{sec:fluidpdes} of this paper, we carefully develop a
general formulation and bounding principle using the
Doering-Constantin background flow method, and in
Section~\ref{sec:mixedth} specialize to mixed thermal conditions in a
manner that interpolates smoothly between the fixed temperature and
fixed flux cases.  The use of a piecewise linear background
temperature profile and explicit estimates derived in
Section~\ref{sec:pwlinest} enables us, in Section~\ref{sec:Bibound},
to derive analytical bounds on the $\Nu$--$\Ra$ relationship
asymptotically valid for $\Ra \to \infty$.  For completeness, we also
prove the $\Ltwo$ boundedness of temperature and velocity fields in
Appendix~\ref{app:bounded}, and prove rigorous, though less sharp,
bounds on the $\Nu$--$\Ra$ relationship in
Appendix~\ref{app:rigbounds}.

Summarizing our results: 
Since in general the boundary temperatures are unknown \emph{a
  priori}, it is necessary to introduce a control parameter $\R$,
which equals the standard Rayleigh number $\Ra$ only in the fixed
temperature case $\Bi = 0$.  \cite{OWWD02} showed in the
fixed flux case $\Bi = \infty$ that while the bound was $\Nu \leq
C_{\infty} \Ra^{1/2}$, it was obtained through the estimates $\Nu \leq
c_1 \R^{1/3}$, $\Ra \geq c_2 \R^{2/3}$, quite unlike the fixed
temperature case $\Nu \leq C_0 \R^{1/2}$, $\Ra = \R$.  

In the present work, for general sufficiently small Biot number $\Bi$
we show that for small $\R$ we have $\Nu \lesssim \bigO(\R^{1/2})$,
$\Ra \gtrsim \bigO(\R)$ as in the fixed temperature case, but for $\R$
(and hence $\Ra$) beyond some critical parameter $\Rsuff =
\bigO(\Bi^{-2})$, we find $\Nu \leq c_1(\Bi) \R^{1/3}$, $\Ra \geq
c_2(\Bi) \R^{2/3}$, implying $\Nu \leq C_{\Bi} \Ra^{1/2}$ with
intermediate scaling as in the fixed flux case.  Interestingly, for
$\Bi > 0$ we find $C_{\Bi} = C_{\infty}$: at least at the level of our
estimates, the asymptotic scaling in each case is as for fixed flux
boundary conditions (providing rigorous support for the intuition that
for sufficiently high $\Ra$, the plates essentially act as
insulators), while the fixed temperature problem is a singular limit
of the general asymptotic bounding problem.  More details of the
scaling of the bounds in different regimes, together with numerically
obtained conservative bounds for piecewise linear backgrounds, will be
given elsewhere (\cite{WiGa08prep}).

The use of mixed ``Newton's Law of Cooling'' boundary conditions with
fixed Biot number $\Bi$ to model imperfectly conducting boundaries is
a simplification, however, since in general the Biot number depends on
horizontal wave number (see for instance \cite{CrHo93}).  In a
subsequent paper (\cite{Witt08ub}, Part~2 of the present work), we
improve upon our model by formulating and obtaining bounds for the
more realistic problem of a fluid bounded by plates of finite
thickness and conductivity, establishing a systematic correspondence
between that situation and the present fixed Biot number case.

\section{Governing equations and bounding principle
  with general thermal conditions at fluid boundaries}
\label{sec:fluidpdes}

We begin by formulating the standard Rayleigh-B\'enard convection
problem in the fluid and developing a bounding principle in the usual
way, but without fixing the thermal conditions at the fluid
boundaries.  For reference and clarity, though, we occasionally point
out the forms of our results in the fixed temperature and fixed flux
special cases previously treated in the literature, as our development
is designed to interpolate between these extremes.

\subsection{Governing differential equations and nondimensionalization}
\label{ssec:Bousnondim}

In the Boussinesq approximation, the equations of motion in the fluid
are 
\begin{align}
  \label{eq:Bousdim1}
  \frac{\partial \bu^*}{\partial t^*} + \bu^* \cdot \del^* \bu^* +
  \frac{1}{\rhof} \del^* P^* & = \nuf 
  \nabla^{*2} \bu^* + \alpha g (T^* - T_0) \unitz \ , \\
  \label{eq:Bousdim2}
  \del^* \cdot \bu^* & = 0 \ , \\
  \label{eq:Bousdim3}
  \frac{\partial T^*}{\partial t^*} + \bu^* \cdot \del^* T^* & = \kf
  \nabla^{*2} T^* \ , \\ 
  \label{eq:Bousdim4}
  \bu^*|_{z^* = 0,h} & = \bO \ ,
\end{align}
where $\nuf$ and $\kf$ are the momentum and thermal diffusivities of
the fluid, respectively, $g$ is the acceleration due to gravity,
$\alpha$ is the thermal expansion coefficient, $\rhof$ the fluid
density at some reference temperature $T_0$, and $h$ is the height of
the fluid layer, as in Figure~\ref{fig:RBsetup}; we also let $\cpf$ be
the specific heat and $\kcof = \rhof \, \cpf \, \kf$ be the thermal
conductivity of the fluid.  In this formulation, the compressibility
of the fluid is neglected everywhere except in the buoyancy force
term, and the pressure is determined via the divergence-free condition
on $\bu^*$.  Variables with an asterisk are dimensional, we have
no-slip velocity boundary conditions in the vertical direction, and we
take periodic boundary conditions in the horizontal directions, with
periods $L^*_x$ and $L^*_y$, respectively.
\begin{figure}
  \begin{center}
        \includegraphics[width = 3.5in]{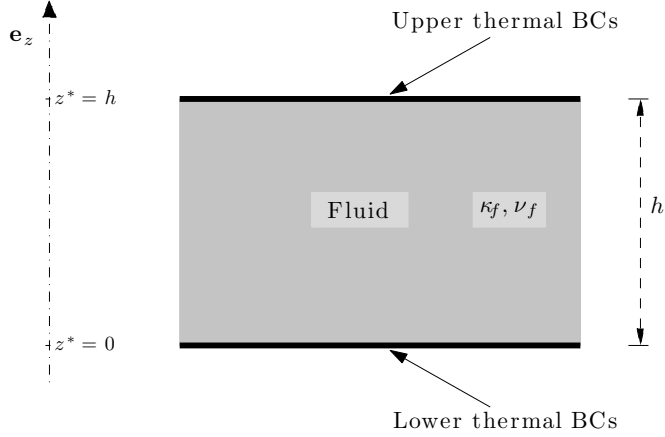}
    \caption{Geometry of Rayleigh-B\'enard convection system with
      thermal boundary conditions imposed at upper and lower limits of
      fluid layer.}
    \label{fig:RBsetup}
  \end{center}
\end{figure}

The present formulation is designed to be flexible with respect to the
choice of thermal boundary conditions (BCs) at the plate-fluid
interfaces $z^* = 0$ and $z^* = h$; so in general we let $\Tscal$ be a
given temperature scale, determined according to the thermal BCs, and
introduce a reference (``zero'') temperature $\Tr$.

The usual and most-studied assumption regarding thermal boundary
conditions at the interfaces is that the \emph{temperature is fixed} at
the upper and lower boundaries:
\begin{equation}
  \label{eq:dirbcdim}
  T^*|_{z^*=0} \equiv T^*(x^*,y^*,0,t^*) = \Tb, \qquad T^*|_{z^* = h} = \Tt.
\end{equation}
These Dirichlet BCs imply a natural choice of reference temperature
$\Tr = \Tt$, while the imposed temperature drop $\Delta T^* \equiv
-\left. T^* \right|_{z^*=0}^{h} \equiv \Tb - \Tt$ introduces a natural
temperature scale $\Tscal = \Delta T^*$.  At the opposite extreme is
the \emph{fixed flux assumption} that the thermal heat
flux $-\kcof T^*_{z^*} \equiv - \kcof \, \partial T^*/\partial z^*$
through the fluid boundaries is a constant, which we call $\fluxdim$.
This corresponds to the Neumann BCs of fixed normal temperature
gradient $-\dzT^*$ at the interfaces:
\begin{equation}
  \label{eq:neubcdim}
  T^*_{z^*}|_{z^* = 0} = T^*_{z^*}|_{z^* = h} = - \dzT^* = -
  \frac{\fluxdim}{\kcof}  ;
\end{equation}
the corresponding temperature scale is $\Tscal = h \dzT^* = h
\fluxdim/\kcof$, while in this case $\Tr$ is arbitrary.
More general mixed (Robin) thermal BCs are discussed later in
Section~\ref{sec:mixedth}. 

We nondimensionalize using $\Tr$ and the temperature scale $\Tscal$,
and with respect to the fluid layer thickness $h$ and thermal
diffusivity time $h^2/\kf$; that is, we take $h$, $h^2/\kf$,
$U = \kf/h$ and $\rhof U^2$ as our appropriate length, time, velocity
and pressure scales respectively.  For $\Tr \not= T_0$, the
nondimensional fluid momentum equation will contain a constant term
proportional to $\Tr - T_0$ in the $\unitz$ direction, which we can
take care of by absorbing it into the rescaled pressure; this effect
of the buoyancy force corresponds to a linear vertical pressure
gradient.
In summary, the nondimensional variables (without asterisks) are
defined by:
\begin{equation*}
  \bx = \frac{\bx^*}{h}, \quad t = \frac{t^*}{t_{\text{scal}}}, \quad 
  \bu = \frac{\bu^*}{U}, \quad T = \frac{T^* - \Tr}{\Tscal},
  \quad  p  = \frac{1}{\Prandtl} \frac{P}{\rhof U^2} -
  \R \frac{\Tr - T_0}{\Tscal} z,
\end{equation*}
where $t_{\text{scal}} = h^2/\kf$, $U = \kf/h$, and $\Prandtl$ and
$\R$ are defined below.  The dimensionless periodicity lengths in the
transverse directions are $L_x = L^*_x/h$ and $L_y = L^*_y/h$.

The equations for the nondimensional fluid velocity $\bu = (u,v,w)$
and fluid temperature $T$ are thus
\begin{align}
  \label{eq:Bous}
  \Prandtl^{-1} \left(\pdt{\bu} + \bu \cdot \del \bu \right) +
  \del p & = \lap \bu + \R \, T \, \unitz , \\
  \del \cdot \bu & = 0 ,\label{eq:divfree} \\
  \pdt{T} + \bu \cdot \del T & = \lap T , \label{eq:heatfluid}
\end{align}
with no-slip BCs $\bu|_{z = 0,1} = \bO$, and $L_x,\ L_y$-periodic BCs
in the horizontal $x$ and $y$ directions in all variables.  Here the
usual Prandtl number $\Prandtl$ and the \emph{control parameter} $\R$
are defined by
\begin{equation}
  \label{eq:PrR}
  \Prandtl = \frac{\nuf}{\kf}, \quad \R = \frac{\alpha g h^3}{\nuf \kf}
  \Tscal.
\end{equation}
In the fixed temperature (Dirichlet) and fixed flux (Neumann) limits,
the nondimensional thermal BCs are, respectively,
\begin{eqnarray}
  \label{eq:dirbc}
  T|_{z=0} = 1, \qquad T|_{z=1} = 0 , \\
  \label{eq:neubc}
  \left. T_z \right|_{z=0} = \left. T_z \right|_{z=1}  = -1 .
\end{eqnarray}

\subsubsection{Some additional notation:}
\label{sssec:notation}

Following \cite{OWWD02}, for functions $h(x,y,z)$
and $g(t)$ we define the horizontal and time averages, $\have{h}(z)$
and $\tave{g}$ respectively, by 
\begin{equation*}
  \have{h}(z) = \frac{1}{A} \iint_A h(x,y,z) \, dx dy = \frac{1}{A}
  \int_0^{L_y} \int_0^{L_x} h(x,y,z)\, dx dy 
\end{equation*}
and
\begin{equation*}
  \tave{g} = \limsup_{\tau \rtarr \infty} \frac{1}{\tau} \int_0^{\tau}
  g(t)\,dt ,
\end{equation*}
where $A = L_x L_y$ is the nondimensional area of the plates.  Also,
$\intf h = A \int_0^1 \have{h}(z) \, dz$ denotes a volume integral
over the entire fluid layer, and the $\Ltwo$ norms are defined over
the fluid by
\begin{equation*}
  \Ltf{h}^2 = \intf h^2 = \int_0^1 \iint_A h^2(x,y,z) \, dxdy\, dz.
\end{equation*}
Finally, over the domain we consider with horizontally periodic
BCs, surface integrals of vector fields $\vecfield{H}$
over the fluid boundary (the interfaces ${\mathcal{I}}$ between the
plates and the fluid) are
\begin{equation*}
  \intdf{\unitn \cdot \vecfield{H}} = A \zlim{\have{\vecfield{H} \cdot
  \unitz}}.
\end{equation*}

\subsection{Rayleigh and Nusselt numbers}
\label{ssec:RaNu}

We define the nondimensional horizontally- and time-averaged
temperature drop across the fluid as
\begin{equation}
  \label{eq:deltadef}
  \dT = - \tave{\zlim{\have{T}}} = \tave{\have{T}|_{z=0} - \have{T}|_{z=1}} =
  \frac{\dTf}{\Tscal} ,
\end{equation}
where $\dTf = \tave{\have{T^*}|_{z^* = 0} - \have{T^*}|_{z^* = h}}$; 
we observe that this is known \emph{a priori} only for fixed
temperature BCs, in which case $\dTf = \Tb - \Tt =
\Tscal$, $\dT = 1$.
The conventional Rayleigh number $\Ra$ is defined in terms of this
averaged temperature difference as 
\begin{equation}
  \label{eq:Radef}
    \Ra  = \frac{\alpha g h^3}{\nuf \kf}  \dTf =
    \frac{\alpha g h^3 \Tscal}{\nuf \kf} \dT ,
\end{equation}
showing that $\Ra$ is related to the control parameter $\R$ by
\begin{equation}
  \label{eq:RRa}
  \Ra = \R \, \dT .
\end{equation}

The Nusselt number $\Nu$ is a nondimensional measure of the total heat
transport through the fluid layer, which may be derived by first
rewriting the thermal advection equation in the fluid
\eqref{eq:heatfluid} as a conservation law, $T_t + \del \cdot \bJ = 0$
(using incompressibility).  Here the dimensionless heat current $\bJ =
\bu T + \bJ_c$ is composed of the conductive heat current $\bJ_c = -
\del T$ and the convective heat current $\bJ_v = \bu T$, with
corresponding overall instantaneous conductive and convective vertical
heat transport $j_c(t) = \intf \unitz \cdot \bJ_c = \iint_A \int_0^1 -
T_z \, dz \, dx dy = - A \zlim{\have{T}}$ and $j_v(t) = \intf \unitz
\cdot \bJ_v = \intf w T$, respectively.
The Nusselt number $\Nu$ is now defined as the ratio of the total
(averaged) vertical heat transport, $j(t) =
\intf \unitz \cdot \bJ = j_c(t) + j_v(t)$, to the purely conductive
transport:
\begin{equation}
  \label{eq:Nudef}
  \Nu = \frac{\tave{j_c(t) + j_v(t)}}{\tave{j_c(t)}} = 1 +
  \frac{\frac{1}{A}\tave{\intf w T}}{\dT} .
\end{equation}

A more useful expression, which allows us to estimate $\Nu$ from the
equations of motion, is found by relating $\tave{\intf w T}$ to the
time-averaged temperature drop and boundary flux.  To do so, we begin
by taking the horizontal average of the temperature equation
\eqref{eq:heatfluid}, using the horizontally periodic BCs, to get
\begin{equation}
  \label{eq:tfhave}
  \pdt{\have{T}} + \have{\del \cdot \bJ} = \pdt{\have{T}} +
  \frac{\partial}{\partial z} \left(\have{ w T} - \have{T}_z
  \right) = 0 .
\end{equation}
Integrating over $z$ and using the vertical no-slip boundary
conditions on $w$, 
\begin{equation}
  \label{eq:tfave}
  \frac{d}{dt} {\intf T} + A \int_0^1 \left( \have{wT} - \have{T}_z
  \right)_z  \, dz = \frac{d}{dt} {\intf T} + A
  \zlim{(-\have{T}_z)} = 0. 
\end{equation}
As we show in Appendix~\ref{app:bounded}, the thermal energy
$\Ltf{T}^2 = \intf T^2$ is uniformly bounded in time for the thermal
BCs we consider, so that $\intf T$ is also uniformly bounded by the
Cauchy-Schwarz lemma via $\intf T \leq A^{1/2} \Ltf{T}$.  Hence on
taking a time average of \eqref{eq:tfave}, the time derivative term
vanishes, and we find that
\begin{equation}
  \label{eq:betadef}
  \left. \ltave{- \have{T}_z} \right|_{z=0} = 
  \left. \ltave{- \have{T}_z} \right|_{z=1} \equiv \dzT ,
\end{equation}
where the above expression defines $\dzT$, the horizontally- and
time-averaged vertical temperature gradient, or equivalently, the
\emph{nondimensional heat flux} at the fluid boundaries; note that
(only) in the fixed flux case, this quantity is known, $\dzT = 1$.  As
expressed by \eqref{eq:betadef}, obviously on average, there is a
balance between the heat fluxes entering the fluid layer at the bottom
and leaving at the top.

Taking a time average of \eqref{eq:tfhave}, via a maximum principle
on $T$ the time derivative term $\tave{\partial \have{T}/\partial t}$
would vanish; however, for fixed flux BCs we do not have such an
\emph{a priori} maximum principle.  Following \cite{OWWD02}, uniformly
in thermal BCs we can instead multiply \eqref{eq:tfhave} by $z$ and
integrate to obtain
\begin{equation}
  \label{eq:tfzave}
  \frac{d}{dt} \int_0^1 z \have{T}\,dz + \int_0^1 z \left( \have{wT} -
    \have{T}_z \right)_z \, dz  = 0 \ ; 
\end{equation}
integrating the second term by parts and using the no-slip boundary
conditions, we have 
\begin{equation}
  \label{eq:tfzave2}
   \int_0^1 z \left( \have{wT} - \have{T}_z \right)_z \, dz = z
   \left. \left( \have{wT} - \have{T}_z\right) \right|_0^1 - 
    \int_0^1 \left( \have{wT} - \have{T}_z \right) \, dz 
    = - \have{T}_z|_{z=1} - \frac{1}{A} \intf w T + \zlim{\have{T}} .
\end{equation}
Now as before, via 
\begin{equation*}
  \int_0^1 z \have{T}\,dz = \frac{1}{A} \intf z T \leq \frac{1}{A}
  \left(\intf z^2\right)^{1/2} \left( \intf T^2 \right)^{1/2} = (3
  A)^{-1/2} \Ltf{T}
\end{equation*}
and the uniform boundedness of $\Ltf{T}$, the time average of the
first term in \eqref{eq:tfzave} vanishes.  Thus taking time averages
of \eqref{eq:tfzave}, using \eqref{eq:tfzave2} and the definitions
\eqref{eq:deltadef} and \eqref{eq:betadef}, we obtain
\begin{equation}
  \label{eq:wTf}
  \frac{1}{A} \ltave{\intf w T} = \dzT - \dT .
\end{equation}

The expression \eqref{eq:wTf} may now immediately be substituted into
\eqref{eq:Nudef} to obtain the fundamental identity for the Nusselt
number for general thermal BCs,
\begin{equation}
  \label{eq:Nueq}
  \Nu = \frac{\dzT}{\dT} ,
\end{equation}
and consequently, using \eqref{eq:RRa}, we have $\Nu \, \Ra = \R \,
\dzT$.

In the special case of fixed temperature BCs, the identity
\eqref{eq:Nueq} is well-known (\cite{DoCo96}): with $\dT = 1$, the
Nusselt number equals the time- and horizontally-averaged flux at the
boundary, $\Nu = \dzT = \tave{- \have{T}_z|_{z=0,1}}$, and thus an
upper bound on $\Nu$ is found by bounding $\dzT$ from above.  For the
fixed flux case, $\dzT = 1$ is known, the Nusselt number is the
inverse of the averaged temperature drop, $\Nu = \dT^{-1}$
(\cite{OWWD02}), and one seeks a \emph{lower} bound on $\dT$.  In
general, though, neither $\dT$ nor $\dzT$ is known \emph{a priori},
though they are related through the thermal BCs, as discussed in Section~\ref{sec:mixedth}.

\subsection{Global energy balance}
\label{ssec:globalenergy}

We next obtain the basic ``energy'' identities from the
governing Boussinesq equations, which allow us to relate the Nusselt
number to the momentum and heat dissipation: 
First, taking the inner product of
the momentum equation \eqref{eq:Bous} with $\bu$, integrating over the fluid
domain, integrating by parts and using incompressibility and no-slip
BCs, we find 
\begin{equation}
  \label{eq:fluiden}
  \frac{1}{2 \Prandtl} \frac{d}{dt} \Ltf{\bu}^2 = - \Ltf{\del \bu}^2
  + R \intf w T . 
\end{equation}
The $\Ltwo$ norm of the fluid velocity is \emph{a priori} bounded, as
shown in Appendix~\ref{app:bounded}; hence, taking time averages, we
derive the result (using \eqref{eq:wTf})
\begin{equation}
  \label{eq:gradu2}
  \frac{1}{R} \tave{\Ltf{\del \bu}^2 } = \ltave{\intf w T} = A \left(
    \dzT -  \dT \right).
\end{equation}
Observe that \eqref{eq:gradu2} implies that $\dzT \geq \dT$, so that
by \eqref{eq:Nueq} we have $\Nu \geq 1$, as expected.

Similarly, we can take multiply the thermal
advection-diffusion equation \eqref{eq:heatfluid} by $T$, integrate over the
fluid and integrate by parts.  Neither the advection term nor the
horizontal boundary terms contribute, so we find
\begin{equation}
  \label{eq:heaten}
  \frac{1}{2} \frac{d}{dt} \Ltf{T}^2 = - \Ltf{\del T}^2 +
  \intdf{\unitn \cdot (T \del T)} = - \Ltf{\del T}^2 + A
  \zlim{\have{T T_z}}. 
\end{equation}
Taking time averages and using boundedness of $\Ltf{T}$, we find
\begin{equation}
  \label{eq:gradT}
  \tave{\Ltf{\del T}^2 } = A \ltave{\zlim{\have{T T_z}} } .
\end{equation}

Again, for reference we quote these results in the known limits: for
fixed temperature BCs \eqref{eq:dirbc} the global energy identities
\eqref{eq:gradu2}, \eqref{eq:gradT} become (\cite{DoCo96})
\begin{align}
  \frac{1}{A R} \tave{\Ltf{\del \bu}^2 } & = \left( \dzT
  - 1 \right) = \Nu - 1 , \label{eq:gradudir} \\
  \frac{1}{A} \tave{\Ltf{\del T}^2 } & = \tave{ - \have{T_z}(0) } =
  \dzT =  \Nu ; \label{eq:gradTdir}
\end{align}
while in the fixed flux case \eqref{eq:neubc}, they are (\cite{OWWD02})
\begin{align}
  \frac{1}{A R} \tave{\Ltf{\del \bu}^2 } & = \left( 1 - \dT \right)
  = 1 - \Nu^{-1}  , \label{eq:graduneu} \\
  \frac{1}{A} \tave{\Ltf{\del T}^2 } & = \tave{ \zlim{ -
  \have{T}} } = \dT = \Nu^{-1} . \label{eq:gradTneu}
\end{align}

\subsection{Background fields}
\label{ssec:backgroundfields}

In the spirit of the ``background'' method of Doering and Constantin,
we now introduce a decomposition of the flow into a background, which
carries the boundary conditions of the flow, and a space- and
time-dependent fluctuating field (\cite{DoCo96,Kers01}).  One chooses
fields $\bU$ and $\btau$ which satisfy the same velocity and
temperature BCs as $\bu$ and $T$, and appropriate evolution
equations---such $(\bU,\btau)$ are referred to as ``background'' flow
and temperature fields---and lets $\bv(\bx,t)$ and $\theta(\bx,t)$ be
arbitrary space- and time-dependent perturbations satisfying
homogeneous BCs, so that the velocity and temperature fields are
decomposed into a background plus a fluctuation, according to
$\bu(\bx,t) = \bU + \bv(\bx,t)$, $T(\bx,t) = \btau + \theta(\bx,t)$.

For simplicity of presentation we restrict our attention to the case
of zero background velocity field and a $z$-dependent temperature
background, $(\bU,\btau) = (\bO,\tau(z))$, as it appears that more
general backgrounds are unlikely to improve the overall scaling of the
bounds (\cite{Kers01}).  Furthermore, when the upper and lower
boundaries of the fluid have identical thermal properties, it is
sufficient to consider only background fields satisfying $\tau'(0) =
\tau'(1)$ (compare \eqref{eq:betadef}), and we define
\begin{equation}
  \label{eq:gammadtaudef}
  \dtau = \tau(0) - \tau(1), \qquad \dztau = - \tau'(0) = - \tau'(1) .
\end{equation}

We thus define $\bv$ and $\theta$ via the decomposition 
\begin{equation}
  \label{eq:decomp}
  \bu(\bx,t) = \bv(\bx,t), \qquad T(\bx,t) = \tau(z) + \theta(\bx,t) , 
\end{equation}
(note that we prefer to preserve the (notational) distinction between
the overall velocity field $\bu$ and the fluctuating field $\bv$,
though for convenience we ignore this distinction in writing the
components $(u,v,w)$ of the velocity field, for instance $w = \bu
\cdot \unitz = \bv \cdot \unitz$) and immediately obtain identities
between the norms of the gradients of the full solutions and their
fluctuations:
\begin{align}
  \frac{1}{\R} \Ltf{\del \bu}^2 & = \frac{1}{\R} \Ltf{\del \bv}^2
  , \label{eq:graduid} \\ 
  \Ltf{\del T}^2 & = \Ltf{\del \theta}^2 + 2 \intf \theta_z \tau'
  + \intf \tau'^2 . \label{eq:gradTid} 
\end{align}

Inserting the decomposition \eqref{eq:decomp} into the Boussinesq
equations yields the evolution equations for the perturbations,
\begin{align}
  \label{eq:fluctmom}
  \Prandtl^{-1} \left(\pdt{\bv} + \bv \cdot \del \bv \right) +
  \del \tilde{p} & = \lap \bv + \R \, \theta \, \unitz , \\
  \del \cdot \bv & = 0 ,\label{eq:fluctdivfree} \\
  \pdt{\theta} + \bv \cdot \del \theta & = \lap \theta + \tau'' - w
  \tau' , \label{eq:fluctheat}
\end{align}
where $\bv$ and $\theta$ satisfy appropriate homogeneous boundary
conditions; that is, all fields are horizontally periodic, $\bv$
satisfies the no-slip BCs $\bv|_{z=0,1} = \bO$, and $\theta$ satisfies
homogeneous thermal BCs consistent with those for $T$.
In particular, for fixed temperature BCs, we have $\tau(0) = 1$,
$\tau(1) = 0$, and $\theta$ satisfies homogeneous Dirichlet BCs,
$\theta|_{z=0,1} = 0$; while in the fixed flux case, we require
$\tau'(0) = \tau'(1) = -\dztau = -1$, so that $\theta$ satisfies the
homogeneous Neumann BCs $\theta_z|_{z=0,1} = 0$.

In an analogous way to the calculations of
Section~\ref{ssec:globalenergy}, we may find the energy identities for
the fluctuations $\bv$ and $\theta$: 
Taking the inner product of \eqref{eq:fluctmom} with $\bv$,
integrating over the fluid and using incompressibility, the evolution
equation for the $\Ltwo$ norm of the fluctuating velocity field is 
\begin{equation}
  \label{eq:vL2}
  \frac{1}{2 \Prandtl \R} \frac{d}{dt} \Ltf{\bv}^2 = - \frac{1}{\R}
  \Ltf{\del \bv}^2 + \intf w \theta , 
\end{equation}
as in \eqref{eq:fluiden}.  
We find the corresponding $\Ltwo$ evolution for the perturbed temperature
$\theta$ by multiplying \eqref{eq:fluctheat} by $\theta$ and
integrating:
\begin{equation}
  \label{eq:thL2}
  \frac{1}{2} \frac{d}{dt} \Ltf{\theta}^2  = - \Ltf{\del \theta}^2 +
  A \zlim{\have{\theta \theta_z}} - \intf \theta_z
  \tau' + A \zlim{\have{\theta} \tau'} - \intf w \theta \tau' .
\end{equation}
For future reference, using \eqref{eq:gammadtaudef} and the decomposition
\eqref{eq:decomp}, the second boundary term in \eqref{eq:thL2} has
time average
\begin{equation}
  \label{eq:thbdy1}
  \tave{\zlim{\have{\theta} \tau'}} = - \dztau
  \tave{\zlim{\have{\theta}}} \equiv \dztau \dtheta = \dztau \left(
    \dT - \dtau \right) .
\end{equation}

\subsection{Governing equations for general bounding principle}
\label{ssec:genformulation}

In order to formulate upper bounding principles for the Nusselt
number, we take appropriate linear combinations of the $\Ltwo$ identities
\eqref{eq:vL2}--\eqref{eq:thL2} for the fluctuating quantities, and
the identities \eqref{eq:graduid}--\eqref{eq:gradTid} for the
decomposition of the gradient into background and fluctuating parts.
In general, such linear combinations may contain three free parameters,
over which one might optimize to obtain the best possible bound
available within such a formalism (\cite{Kers97,Kers01}).
However, we shall consider only the restricted special case of a
single ``balance parameter'' $b$ (\cite{NGH97}); in this simplified
formulation, in fact we do not require the energy dissipation equation
\eqref{eq:vL2} at all.

We can eliminate the $\intf \theta_z \tau'$ term in the thermal energy
equations by taking $2 \cdot \eqref{eq:thL2} + \eqref{eq:gradTid}$ to
give
\begin{equation}
  \label{eq:thbal1}
  \frac{1}{2} \frac{d}{dt} \left( 2 \Ltf{\theta}^2 \right) +
  \Ltf{\del T}^2 = \intf \tau'^2 - \Ltf{\del \theta}^2 - 2
  \intf \theta w \tau' + 2 A \zlim{\have{\theta \theta_z}} + 2 A
  \zlim{\have{\theta} \tau'} .
\end{equation}
Now taking a weighted average $b \cdot \eqref{eq:thbal1} + (1-b) \cdot
\eqref{eq:graduid} = b \cdot [2\eqref{eq:thL2} + \eqref{eq:gradTid}] +
(1-b) \cdot [0\eqref{eq:vL2} + \eqref{eq:graduid}]$, we have
\begin{equation}
  \label{eq:thvbal1}
  \begin{split}
    \frac{1}{2} \frac{d}{dt} \left( 2 b \Ltf{\theta}^2 \right) + b
    \Ltf{\del T}^2 + \frac{1-b}{R} \Ltf{\del \bu}^2 
    = b \intf \tau'^2 + 2 b A \zlim{\have{\theta} \tau'} \\
    + 2 b A \zlim{\have{\theta \theta_z}} + \intf \left[
      \frac{1-b}{\R} | \del \bv|^2 - 2 b \tau' w
      \theta - b |\del \theta|^2       \right] .
  \end{split}
\end{equation}

We now take time averages and note that the time derivative term
vanishes due to the boundedness of $\Ltf{\theta}$, as shown in
Appendix~\ref{app:bounded}.  Using \eqref{eq:gradu2}, \eqref{eq:gradT}
and \eqref{eq:thbdy1}, we find
\begin{equation}
  \label{eq:bdid1}
  b A \tave{\zlim{\have{T T_z}}} + (1-b) A (\dzT - \dT) = b \intf
  \tau'^2 + 2 b A \dztau \dtheta + 2 b A \tave{\zlim{\have{\theta
  \theta_z}}} -  b \, \Qtb [\bv,\theta] , 
\end{equation}
where we define the quadratic form
\begin{equation}
  \label{eq:Qtbdef}
  \Qtb [\bv,\theta] = \ltave{\intf \left[ \frac{b-1}{b \R} |\del
      \bv|^2 + 2 \tau' w \theta + |\del \theta|^2  \right]} =
  \ltave{\intf \left[ \frac{1}{\Reff} |\del 
      \bv|^2 + 2 \tau' w \theta + |\del \theta|^2  \right]} .
\end{equation}
Here we have defined an \emph{``effective control parameter''} $\Reff$
via
\begin{equation}
  \label{eq:Reffdef}
  \Reff = \frac{b}{b-1} \R ,
\end{equation}
having observed that the quadratic form $\Qtb$ depends on $\R$ and the
balance parameter $b$ only through the combination $b \R/(b-1)$.  We
desire a positive balance parameter $b$ so that a lower bound on
$\Qtb$ (and hence on $\Qtiltb$ in
\eqref{eq:bdid2}--\eqref{eq:Qtiltbdef} below) should imply an upper
bound on $\dzT$ and/or a lower bound on $\dT$; since a necessary
condition for $\Qtb$ to be a positive definite quadratic form is that
$\Reff > 0$ or $(b-1)/b > 0$, we thus require $b > 1$.

Continuing with the formulation of the governing equations, using
\eqref{eq:decomp}, \eqref{eq:betadef} and \eqref{eq:gammadtaudef}, we
decompose the first term in \eqref{eq:bdid1} via
\begin{equation}
  \label{eq:TTzid}
  \tave{\zlim{\have{T T_z}}} = \tave{\zlim{\tau \have{T_z}}} +
  \tave{\zlim{\have{\theta} \tau'}} + \tave{\zlim{\have{\theta
  \theta_z}}} = \dzT \dtau + \dztau \dtheta + \tave{\zlim{\have{\theta
  \theta_z}}}. 
\end{equation}
Substituting \eqref{eq:TTzid} into \eqref{eq:bdid1}, writing $\dtheta
= \dT - \dtau$ and rearranging terms, we obtain
\begin{equation}
  \label{eq:bdid2}
  (1-b) (\dzT - \dT) + b \left( \dzT \dtau - \dztau \dT \right) = b
  \left( \int_0^1 \tau'^2 \, dz - \dztau \dtau \right) - \frac{b}{A} 
  \Qtiltb[\bv,\theta] . 
\end{equation}
where we have now defined the quadratic form
\begin{equation}
  \label{eq:Qtiltbdef}
  \begin{split}
    \Qtiltb[\bv,\theta] & = \Qtb[\bv,\theta] - \ltave{\intdf{\theta
        \unitn \cdot \del \theta}} \\
    & = \ltave{\intf \left[ \frac{1}{\Reff} |\del \bv|^2 + 2 \tau'
        w \theta + |\del \theta|^2 \right] - A \zlim{\have{\theta
          \theta_z}}} .
  \end{split}
\end{equation}
Equation \eqref{eq:bdid2}, which is still fully independent of
thermal BCs (subject to the symmetry condition $\tau'(0) = \tau'(1) =
-\dztau$), is the governing identity underlying our upper bounding
principle.

We comment that the prime in the notation $\Qtiltb$ refers to the
addition of the boundary terms to $\Qtb$ (no implied differentiation),
and note that for both fixed temperature and fixed flux BCs, the
boundary term $\tave{\zlim{\have{\theta \theta_z}}}$ vanishes, so
$\Qtiltb = \Qtb$.  In these cases, \eqref{eq:bdid2} thus reduces to
\begin{equation}
  \label{eq:bdiddir}
  \Nu - 1 = \dzT - 1 = b \left( \int_0^1 \tau'^2 \, dz - 1 \right) -
  \frac{b}{A} \Qtb [\bv,\theta] 
\end{equation}
for Dirichlet thermal BCs (for which $\dT = \dtau = 1$), and to 
\begin{equation}
  \label{eq:bdidneu}
  1 - \Nu^{-1} = 1 - \dT = b \left( \int_0^1 \tau'^2 \, dz - 2 \dtau + 1
  \right) - \frac{b}{A} \Qtb [\bv,\theta] .
\end{equation}
for Neumann thermal BCs, in which case $\dzT = \dztau = 1$.  For
general thermal BCs, $\dzT$ and $\dT$ are both \emph{a priori}
unknown, but they are related via the boundary conditions (see for
instance \eqref{eq:betadelta} below), so
that \eqref{eq:bdid2} may be written in terms only of either $\dzT$ or
$\dT$.

\subsection{Allowed fields, admissible backgrounds and the spectral
  constraint}
\label{ssec:admiss}

As formulated thus far, the general governing equation
\eqref{eq:bdid2} is an identity.  If, for given thermal BCs, one had
access to $\bv(\bx,t)$ and $\theta(\bx,t)$ satisfying
\eqref{eq:fluctmom}--\eqref{eq:fluctheat}, or sufficient information
about them to compute the time-averaged quadratic form $\Qtiltb
[\bv,\theta]$ \eqref{eq:Qtiltbdef}, then $\dzT$ and $\dT$, and hence
$\Nu$, could in principle be computed.  However, for turbulent
convection such analytical information is well beyond the limits of
what is (currently) accessible.  The fundamental insight underlying
upper bounding methods for convection is firstly, that if for some
$\tau$ and $b$, $\Qtiltb$ can be shown to be bounded below,
then this yields, ultimately, an \emph{upper bound} on the Nusselt
number $\Nu$ (for a given $\R$); and secondly, that such a lower bound
on $\Qtiltb$ may indeed often be demonstrated provided one is prepared to
widen the class of fields $\bv$, $\theta$ over which the minimization
takes place, as long as this class contains all solutions of
\eqref{eq:fluctmom}--\eqref{eq:fluctheat}.  The cost of weakening the
constraints on $\bv$ and $\theta$ is that this may reduce the lower
bound on $\Qtiltb$ and thereby weaken the upper bound estimate for
$\Nu$.

\subsubsection{Allowed fields $\bv$, $\theta$:}
\label{sssec:allowed}

In considering the class of allowable flows $\bv = \bu$ and
(fluctuation) temperature fields $\theta = T - \tau$ over which to
minimize $\Qtiltb$, we observe that if the dynamical constraints
\eqref{eq:fluctmom}--\eqref{eq:fluctheat} on the fields $[\bv,\theta]$
are removed without being replaced by assumptions on the temporal
structure or correlations of the fields---detailed knowledge of which
is unavailable---it becomes sufficient to minimize the quadratic form
$\Qtiltb$ over stationary fields $\bv(\bx)$ and $\theta(\bx)$.  The
conditions we can assume these fields $\bv(\bx)$ and $\theta(\bx)$ to
satisfy are: appropriate homogeneous boundary conditions, the
incompressibility constraint on the velocity fluctuations, and
boundedness of $\Ltf{\bv}$ and $\Ltf{\theta}$ (see
Appendix~\ref{app:bounded}).
More precisely, we denote the ``allowed'' fields $\bv$ and $\theta$
over which we minimize $\Qtiltb$ to be defined on
$[0,L_x]\times[0,L_y]\times[0,1]$, periodic in the horizontal
directions, so that $\del \cdot \bv = 0$, $\bv = \bO$ on $z = 0, 1$,
and $\theta$ satisfies homogeneous thermal boundary conditions
consistent with those of $T$.  For fixed temperature convection,
plausible but non-rigorous regularity assumptions restricting the
class of allowed fields have been shown to improve the scaling of the
$\Nu$-$\Ra$ bounds (\cite{CoDo96,Kers01}), but we shall not pursue
such assumptions here.

Observe that according to the above description, the trivial fields
$\bv = \bO$, $\theta = 0$ are ``allowed'', though they do not satisfy
\eqref{eq:fluctmom}--\eqref{eq:fluctheat} for $\tau'' \not= 0$.  Since
furthermore, $\Qtiltb[\lambda \bv,\lambda \theta] = \lambda^2
\Qtiltb[\bv,\theta]$ for $\lambda \in \reals$, it follows that if
$\Qtiltb$ is bounded below, then the minimum is zero.

\subsubsection{Admissible backgrounds:}
\label{sssec:admissible}

For each $\Reff > 0$ (that is, for each $\R > 0$ and $b>1$), we
thus denote a background field $\tau(z)$ \emph{admissible} if it
satisfies the appropriate BCs, the same as those for $T$ at the upper
and lower interfaces; and if the resultant quadratic form $\Qtiltb$ is
non-negative, $\Qtiltb[\bv,\theta] \geq 0$ for all allowed fields
$\bv$ and $\theta$.

Note that the positivity condition on the quadratic form
$\Qtiltb[\bv,\theta] \geq 0$ is equivalent to
\begin{equation}
  \label{eq:spec1}
  \Lambda(\tau,\Reff) = \mathop{\inf_{\bv,\theta}}_{\Ltf{\bv}^2 +
    \Ltf{\theta}^2 \not= 0} \left(
    \frac{\Qtiltb[\bv,\theta]}{\Ltf{\bv}^2 + \Ltf{\theta}^2} \right)
  \geq 0,
\end{equation}
where the infimum is taken over allowed fields $\bv$ and $\theta$.
Via the associated Euler-Lagrange equations, the condition
\eqref{eq:spec1} is equivalent to requiring that the linear operator
\begin{equation}
  \label{eq:specop}
  \ELop \begin{pmatrix} \bv \\ \theta \end{pmatrix} = 
  \begin{pmatrix}
    - \frac{1}{\Reff} \lap \bv + \tau' \theta + \del p \\
    - \lap \theta + \tau' w
  \end{pmatrix}
\end{equation}
acting on allowed fields $\bv$, $\theta$ has a positive semi-definite
spectrum, or that the lowest eigenvalue $\Lambda_0$ of $\ELop$ is
non-negative.  Consequently, the admissibility criterion on background
fields $\tau(z)$ (for a given $\Reff = b\R/(b-1) > 0$), equivalent to
\eqref{eq:spec1}, is also referred to as a \emph{spectral constraint}
on $\tau$ (\cite{DoCo96}).

\subsubsection{Fourier formulation of admissibility condition:}
\label{sssec:Fourieradmiss}

Due to the horizontal periodicity of the problem, we may reformulate
the admissibility condition $\Qtiltb \geq 0$ in horizontally
Fourier-transformed variables: we write the vertical component of
velocity $w$ and temperature fluctuation $\theta$ as
\begin{equation}
  \label{eq:wthk}
  w(x,y,z) = \sum_{\bk} \e^{\ii (k_x x + k_y y)} \whk(z) ,
  \qquad \theta(x,y,z) = \sum_{\bk} \e^{\ii (k_x x + k_y y)}
  \thk(z) ,  
\end{equation}
where the horizontal wave vector is $\bk = (k_x,k_y) = (2\pi
n_x/L_x,2\pi n_y/L_y)$, and we write $k^2 = |\bk|^2$; we shall also
write $\conj{\whk}$ for the complex conjugate of $\whk$ and $\D =
d/dz$.  We can use incompressibility to express the transformed
horizontal components of velocity in terms of the vertical component,
so that the admissibility criterion may be written completely in terms
of the Fourier modes $\whk$ and $\thk$.  This considerably simplifies
the formulation, particularly since $\Qtiltb$ is a quadratic form
(equivalently, the Euler-Lagrange equations for the minimization
problem are linear), so that different horizontal Fourier modes
decouple.  The no-slip boundary condition and incompressibility imply
that the BCs for $\whk(z)$ are $\whk = \D \whk = 0$ for $z = 0, 1$;
the BCs on $\thk$ obviously depend on the choice of thermal BCs, which
have so far been left unspecified.  We note also that $\wh_{\bO} = 0$;
this follows from incompressibility and horizontal periodicity via
$A \, \partial \have{w}/\partial z = \iint_A w_z \, dx \, dy = - \iint_A
(u_x + v_y) \, dx \, dy = 0$, which implies using $\have{w}|_{z = 0} =
0$ that $\have{w} = 0$ for all $z$.

Substituting \eqref{eq:wthk} into \eqref{eq:Qtbdef} and using
incompressibility, as in \cite{OWWD02} we can write the quadratic form
$\Qtb$ evaluated on allowed (stationary) fields $\bv$ and $\theta$ as
\begin{equation}
  \label{iq:QQk}
  \Qtb [\bv,\theta] =  \intf \left[ \frac{1}{\Reff} |\del
      \bv|^2 + 2 \tau' w \theta + |\del \theta|^2 \right] \geq A
    \sum_{\bk} \Qk,  
\end{equation}
where (see \cite{CoDo96,Kers01})
\begin{align}
  \Qk \equiv \Qktb [\whk, \thk] & = \int_0^1 \left[
    \frac{1}{\Reff} \left( k^2 |\whk|^2 + 2 |\D \whk|^2 +
      \frac{1}{k^2} |\D^2 \whk|^2 \right) + 2 \tau' \Real{\whk \conj{\thk}}
  \right. \nonumber \\
  \label{eq:Qkdef}
  & \qquad \quad \ \left. +
    \left( k^2 |\thk|^2 + |\D \thk|^2 \right) \right] \, dz \ ; 
\end{align}
note that \eqref{iq:QQk} is an equality for two-dimensional flows.
Since the boundary terms in \eqref{eq:Qtiltbdef} are expressed in
Fourier space as $\zlim{\have{\theta \theta_z}} = \sum_{\bk}
\left. \Real{\thk (z) \, \D \conj{\thk}(z)} \right|_{z=0}^1$, we thus
define
\begin{equation}
  \label{eq:Qktdef}
  \Qkt \equiv \Qkttb [\whk, \thk] = \Qk - \left. \Real{\thk (z) \,
    \D \conj{\thk}(z)} \right|_{z=0}^1 , 
\end{equation}
to give
\begin{equation}
  \label{iq:QtQkt}
  \Qtiltb [\bv,\theta] =  \intf \left[ \frac{1}{\Reff} |\del
      \bv|^2 + 2 \tau' w \theta + |\del \theta|^2 \right] - A
    \zlim{\have{\theta \theta_z}} \geq A \sum_{\bk} \Qkt . 
\end{equation}

Since the class of allowed fields $[\bv, \theta]$ includes fields
containing a single horizontal Fourier mode, it is now clear that
$\Qtiltb$ is a positive quadratic form, $\Qtiltb [\bv, \theta] \geq 0$
for all allowed fields $\bv$ and $\theta$, if and only if all the
quadratic forms $\Qkt = \Qkttb$ are positive.  Thus
the \emph{admissibility criterion} for background fields $\tau(z)$
(for given $\Reff > 0$) may be formulated in Fourier space,
as the condition that $\Qkt [\whk,\thk] \geq 0$ for all $\bk$ and for
all sufficiently smooth (complex-valued) functions $\whk(z)$,
$\thk(z)$ satisfying $\whk = \D \whk = 0$ at $z = 0,1$ and the
appropriate boundary conditions on $\thk$ at $z = 0,1$.

\subsubsection{Bounding principle:}
\label{sssec:genprinciple}

The thermal BCs at the plates imply an equation relating $\dzT$ and
$\dT$, whenever they do not specify either $\dT$ (fixed temperature) or
$\dzT$ (fixed flux) directly.  Thus, for instance (for BCs other than
fixed flux), we can substitute for $\dT$ and write \eqref{eq:bdid2} in
terms of only $\dzT$, as done below (the
fixed temperature and fixed flux cases are given in
\eqref{eq:bdiddir}--\eqref{eq:bdidneu}).

In such a case for a given $\R$ and $b$, for any admissible background
field $\tau(z)$ the inequality $\Qtiltb \geq 0$ implies an upper bound
$\BdzT \taubfunc$ on the averaged boundary flux $\dzT$.  Via the
relation between $\dzT$ and $\dT$, this also gives a lower bound $\BdT
\taubfunc$ on the averaged temperature drop across the fluid; using
\eqref{eq:Nueq}, in this way admissible backgrounds lead to upper
bounds $\Nubnd \taubfunc$ for $\Nu$.  For a given $\R > 0$, the best
upper bound $\NuboundR$ on $\Nu$ obtainable using this approach is now
obtained by minimizing $\Nubnd \taubfunc$ over admissible $\tau(z)$
and $b$.\footnote{Recall that the class of admissible $\tau(z)$
  depends on $\R$ through $\Reff$.}  Finally, the relationship $\Ra =
\R \, \dT \geq \R \, \BdT \taubfunc$ lets us bound $\R$, and hence
$\Nu$, from above as a function of $\Ra$.

\section{Mixed (Robin) thermal boundary conditions}
\label{sec:mixedth}

The formulation and derivations above were developed independent of
thermal boundary conditions, except that we have restricted our
attention to fluids with thermally identical upper and lower
boundaries, which permits the symmetry assumption $\tau'(0) = \tau'(1)
= - \dztau$.  We now specialize to particular BCs to make further
progress: General linear conditions at the boundary of a fluid as in
figure~\ref{fig:RBsetup} are of mixed (Robin) type.  In a subsequent
paper we consider the more realistic case of a fluid in thermal
contact with bounding plates of finite thickness and conductivity.

\subsection{Fixed Biot number boundary conditions and
  nondimensionalization}
\label{ssec:biotnondim}

In dimensional terms, we choose the mixed (Robin) BCs on the plates to
take the form
\begin{equation}
  \label{eq:robbcdim}
  T^* + \Bi^* \unitn \cdot \del^* T^* = \Tlb \ \ \text{on}\ \ z^* = 0, \qquad
  T^* + \Bi^* \unitn \cdot \del^* T^* = \Tub \ \ \text{on}\ \ z^* = h.
\end{equation}
for some given constant $0 \leq \Bi^* < \infty$.\footnote{The limit
  $\Bi^* \to \infty$ is treated by writing \eqref{eq:robbcdim} in the
  equivalent form (for $\Bi^* > 0$) $\unitn \cdot \del^* T^* +
  T^*/\Bi^* = B^*_{l,u}$ on $z^* = 0,h$, where (for $0 < \Bi^* <
  \infty$) $B^*_{l,u} = A^*_{l,u}/\Bi^*$.}  These conditions may be
interpreted as Newton's Law of Cooling (Heating), in which the
boundary heat flux is assumed proportional to the temperature change
across the boundary: $- \kcof \unitn \cdot \del^* T^* = \kcof (T^* -
\Tlb)/\Bi^*$.

We use $\unitn = -\unitz, +\unitz$ on $z^* = 0, h$ respectively, and
nondimensionalize by substituting $z^* = h z$, $T^* = \Tr + \Tscal
T$.  Defining the \emph{Biot number} $\Bi = \Bi^*/h$, we
find\footnote{There appears to be little consensus in the literature
  as to whether the term ``Biot number'' refers to $\Bi$ as defined in
  \eqref{eq:nondimcond}, or to its inverse $\Bi^{-1}$.}
\begin{equation}
  \label{eq:nondimcond}
  T - \Bi \, T_z = \frac{\Tlb - \Tr}{\Tscal}\ \ \text{on}\ \ z = 0,
  \qquad
  T + \Bi \, T_z = \frac{\Tub - \Tr}{\Tscal}\ \ \text{on}\ \ z = 1.  
\end{equation}
At the moment $\Tr$ and $\Tscal$ are still unspecified.  A convenient
choice, consistent with the nondimensionalizations introduced previously for
the limiting fixed temperature and fixed flux cases, is to require the
conducting state ($\bu^* = \bO$, $\del^* T^* = C \, \unitz$ for some
constant $C < 0$) to take the form $\bu = \bO$, $T = 1 - z$ in the
nondimensional variables.  The condition that $T = 1-z$ satisfies the
nondimensional BCs \eqref{eq:nondimcond} implies $(\Tlb - \Tr)/\Tscal
= 1 + \Bi$, $(\Tub - \Tr)/\Tscal = -\Bi$, so that (for $\Bi < \infty$)
our chosen temperature scale and reference temperature are
\begin{equation}
  \label{eq:nondimchoice}
  \Tscal = \frac{\Tlb - \Tub}{1 + 2\Bi}, \qquad \Tr = \frac{\Tub + \Bi
  (\Tlb + \Tub)}{1 + 2\Bi}. 
\end{equation}
Having finally fixed a choice of dimensionless variables, the
nondimensional mixed thermal boundary conditions (fixed Biot number)
are
\begin{equation}
  \label{eq:robbc}
  T - \Bi \, T_z = 1 + \Bi \ \ \text{on}\ \ z = 0, \qquad
  T + \Bi \, T_z = - \Bi \ \ \text{on}\ \ z = 1.
\end{equation}

Note that the mixed (Robin) BCs \eqref{eq:robbc} reduce to the fixed
temperature (Dirichlet) BCs \eqref{eq:dirbc} in the limit $\Bi \to 0$,
and to the fixed flux (Neumann) BCs \eqref{eq:neubc} in the limit $\Bi
\to \infty$; thus we denote $\Bi = 0$ and $\Bi = \infty$ as the
``fixed temperature'' and ``fixed flux'' cases, respectively.

\subsection{Governing identities for fixed finite Biot number}
\label{ssec:mixedid}

In the general case, neither the boundary temperature drop $\dT$ nor
the flux $\dzT$ that combine in the computation \eqref{eq:Nueq} of the
Nusselt number is known \emph{a priori}.  However, we can derive a
relation between them: taking horizontal averages of \eqref{eq:robbc},
and subtracting the upper boundary condition from the lower, we find
$\have{T}|_{z=0} - \have{T}|_{z=1} - \Bi(\have{T}_z|_{z=0} +
\have{T}_z|_{z=1}) = 1 + 2\Bi$. 
Taking time averages and using \eqref{eq:deltadef} and
\eqref{eq:betadef}, we find the fundamental relation
\begin{equation}
  \label{eq:betadelta}
  \dT + 2 \Bi \dzT = 1 + 2\Bi 
\end{equation}
(this formula also holds in the fixed temperature and flux limits $\Bi
\rtarr 0$ and $\Bi \rtarr \infty$).  
Hence for $0 < \Bi < \infty$, an upper bound on $\dzT$ constitutes a
lower bound on $\dT$, and \emph{vice versa}, and we only need to bound one of
these quantities to obtain an upper bound on $\Nu = \dzT/\dT = \dzT/[1
+ 2\Bi (1 - \dzT)] = \dT^{-1} + \left( \dT^{-1} - 1 \right)/2\Bi$.

Using the identity \eqref{eq:betadelta} to solve for either $\dT = 1 +
2\Bi (1 - \dzT)$ or $\dzT = 1 + (1 - \dT)/2\Bi$ and substituting into
the results of
Sections~\ref{ssec:globalenergy}--\ref{ssec:genformulation}, we obtain
the forms of the governing energy identities for mixed thermal BCs.
We shall state these identities in a way that permits us to obtain an
upper bound on $\dzT$ (the relations are stated in terms of $\dzT$,
$\dztau$, $\tave{\have{T_z^2}|_{z=0,1}}$ and
$\tave{\have{\theta_z^2}|_{z=0,1}}$, valid for $\Bi < \infty$); this
is the formulation suitable for small $\Bi$, which reduces to the
corresponding previously stated identities for fixed temperature BCs
in the limit $\Bi \to 0$.  For completeness, in
Appendix~\ref{app:ffidentities} we give the forms of the identities
(equivalent for $0 < \Bi < \infty$) which yield the fixed flux limit
$\Bi \to \infty$.

First, solving for $\dT$ from \eqref{eq:betadelta} and substituting
into the global kinetic energy identity \eqref{eq:gradu2}, we find
\begin{equation}
  \frac{1}{A R} \tave{\Ltf{\del \bu}^2 } = \dzT - \dT = ( 1 +
    2\Bi) \left( \dzT - 
    1 \right)  \label{eq:gradurobd}
\end{equation}
which reduces to \eqref{eq:gradudir} in the limit $\Bi \rtarr 0$.
Similarly, we can evaluate the boundary term in \eqref{eq:gradT} by
using the BCs \eqref{eq:robbc} to solve for $T$ at $z = 0,1$ in terms
of $T_z$; substituting into $\zlim{ T T_z}$ and taking horizontal and
time averages,
we find that the global thermal energy identity \eqref{eq:gradT}
becomes
\begin{equation}
  \frac{1}{A} \tave{\Ltf{\del T}^2 } = \tave{\zlim{\have{T T_z}}}
  = (1 + 2\Bi ) \dzT - \Bi 
  \tave{ \have{T_z^2}|_{z=0} + \have{T_z^2}|_{z=1}} \label{eq:gradTrobd}
\end{equation}
which again reduces to the appropriate fixed temperature limit \eqref{eq:gradTdir}.

In the background flow formulation, the requirement for the background
field to satisfy the given BCs in this case means that $\tau(z)$
should obey \eqref{eq:robbc}, implying that
\begin{equation}
  \label{eq:gammadtau}
  \dtau + 2 \Bi \dztau = 1 + 2 \Bi 
\end{equation}
using \eqref{eq:gammadtaudef}.
Consequently, the perturbation $\theta$ satisfies the homogeneous Robin
BCs $\theta + \Bi \unitn \cdot \del \theta = 0$ at the interfaces,
which in our geometry become 
\begin{equation}
  \label{eq:throbbc}
  \theta - \Bi\, \theta_z = 0 \ \ \text{at}\ \ z = 0, \qquad 
  \theta + \Bi\, \theta_z = 0 \ \ \text{at}\ \ z = 1,
\end{equation}
and translate for horizontal Fourier modes defined in
\eqref{eq:wthk} to the BCs
\begin{equation}
  \label{eq:thkrobbc}
  \thk(0) = \Bi \, \D \thk(0) , \qquad \thk(1) = - \Bi \, \D \thk(1) .
\end{equation}
The boundary term in \eqref{eq:Qtiltbdef} is thus $\intdf{\theta \,
  \unitn \cdot \del \theta} = - \Bi \intdf{(\unitn \cdot \del
  \theta)^2} \leq 0$, or
\begin{equation}
  \label{eq:thbdyrobd}
  \zlim{\have{\theta \theta_z}} = - \Bi (\have{\theta_z^2}|_{z=0} + 
  \have{\theta_z^2}|_{z=1})  ,
\end{equation}
which can equivalently be written in Fourier space (see
\eqref{eq:Qktdef}) as
\begin{equation}
  \label{eq:thkbdyrobd}
  \zlim{\have{\theta \theta_z}} = \sum_{\bk} \zlim{\Real{\thk(z) \D
      \conj{\thk}(z)}} = -\Bi \sum_{\bk} \left( |\D \thk(0)|^2 + |\D
    \thk(1)|^2 \right) . 
\end{equation}
It follows that  $\Qtiltb[\bv,\theta] \geq \Qtb[\bv,\theta]$, so
that a lower bound on $\Qtb$ implies a lower bound on $\Qtiltb$; the
additional boundary term which appears for $\Bi \not= 0, \infty$ is
stabilizing. 

Finally, the form of the governing identity \eqref{eq:bdid2} for mixed
thermal BCs, fundamental to the formulation of a bounding principle,
may now be derived: Solving and substituting for $\dT$ and $\dtau$
using \eqref{eq:betadelta} and \eqref{eq:gammadtau}, we find
\begin{equation}
  \dzT \dtau - \dztau \dT + \dztau \dtau = (1 + 2\Bi )\dzT - 2 \Bi
  \dztau^2 . \label{eq:bgidd} 
\end{equation}
Now substituting \eqref{eq:gradurobd} and \eqref{eq:bgidd},
\eqref{eq:bdid2} becomes in terms of $\dzT$ (compare
\eqref{eq:bdiddir})
\begin{equation}
  \label{eq:bdidrobd}
  (1 + 2 \Bi) (\dzT - 1) = b \left( \int_0^1 \tau'^2 \, dz - (1 + 2
    \Bi) + 2 \Bi \dztau^2 \right) - \frac{b}{A} \Qtiltb[\bv,\theta] ,
\end{equation}
where we evaluate the boundary term in $\Qtiltb[\bv,\theta]$ using
\eqref{eq:thbdyrobd}.

\subsection{A bounding principle for mixed thermal boundary conditions}
\label{ssec:mixbound}

Following the general approach outlined in Section~\ref{ssec:admiss},
the governing equations derived in Section~\ref{ssec:mixedid} now
allow us to formulate an upper bounding principle for the Nusselt
number $\Nu$ in terms of the control parameter $\R$ (and hence in
terms of $\Ra$, via $\Ra = \R \, \dT$):

For each $\R > 0$, if we can choose $b > 1$ and a corresponding
admissible background field $\tau(z)$ (so that $\Qtiltb \geq 0$), then
from \eqref{eq:bdidrobd} the averaged boundary temperature gradient
$\dzT$ is bounded above by
\begin{equation}
  \label{iq:betabnd}
  \dzT \leq 1 - b + \frac{b}{1 + 2\Bi} \left( \int_0^1 \tau'^2\,dz +
  2\Bi \dztau^2 \right) = \BdzT_{\Bi} \taubfunc ,
\end{equation}
while from \eqref{eq:bdidrobn}, the averaged temperature drop across
the fluid $\dT$ is bounded below by 
\begin{equation}
  \label{iq:deltabnd}
  \dT \geq 1 + b (2\dtau - 1) - b \, \frac{2\Bi}{1 + 2\Bi} \left(
    \int_0^1 \tau'^2\,dz + \frac{1}{2\Bi} \dtau^2 \right) = \BdT_{\Bi}
  \taubfunc , 
\end{equation}
where the above equations define the functionals $\BdzT_{\Bi}
\taubfunc$ and $\BdT_{\Bi} \taubfunc$.  (Of course, for $0 < \Bi <
\infty$ the bounds \eqref{iq:betabnd} and \eqref{iq:deltabnd} are not
independent; in principle we only need to find, say, the upper bound
$\BdT_{\Bi} \taubfunc$ for $\dzT$ (for $\Bi < \infty$), since by
\eqref{eq:betadelta} we have $\BdT_{\Bi} \taubfunc + 2\Bi \BdzT_{\Bi}
\taubfunc = 1 + 2\Bi$.)  An upper bound on $\dzT$ and a corresponding
lower bound on $\dT$ then imply via \eqref{eq:Nueq} that the Nusselt
number is bounded above by
\begin{equation}
  \label{iq:Nubnd}
  \Nu \leq \Nubnd_{\Bi} \taubfunc = \BdzT_{\Bi} \taubfunc/\BdT_{\Bi}
  \taubfunc .
\end{equation}

Observe that for the conduction solution $\tau(z) = 1-z$, we have
$\BdzT_{\Bi} \taubfunc = \BdT_{\Bi} \taubfunc = 1$, so that whenever this is an
admissible profile, the bound on the Nusselt number takes its minimum
value of 1, as expected.

\section{Piecewise linear background and elementary estimates}
\label{sec:pwlinest}

As discussed in Section~\ref{ssec:admiss}, the best upper bound
$\NuboundR$ that may be achieved by the above formulation for each
value of $\R$ is obtained by optimizing the upper bounds on the
Nusselt number $\Nubnd \taubfunc$ over all admissible $\tau(z)$ and
over $b > 1$.  Careful numerical studies obtaining such optimal
solutions of analogous bounding problems have been performed for plane
Couette flow (which may be related to fixed temperature convection) by
\cite{PlKe03} and for infinite Prandtl number convection by
\cite{IKP06}.  However, by restricting the class of admissible
backgrounds $\tau(z)$ over which the optimization is performed, upper
bounds may be obtained much more readily, at the (likely) cost of
weakening the upper bound.

\subsection{Piecewise linear background profiles}
\label{ssec:pwlinear}

Following \cite{DoCo96} and subsequent works, we thus introduce a
one-parameter family of piecewise linear background profiles
$\taud(z)$, for which $\taud' = -\dztau$ for $0 \leq z < \delta$ and
$1-\delta < z \leq 1$:
\begin{equation}
  \label{eq:taudelta}
  \tau(z) = \taud(z) = \left \{
    \begin{array}{ll}
      \avetau - \dztau(z-\delta) , & 0 \leq z \leq \delta , \\
      \avetau , & \delta < z < 1 - \delta , \\
      \avetau - \dztau(z - (1-\delta)) , & 1-\delta \leq z \leq 1 ; \\
    \end{array}
  \right.
\end{equation}
see figure~\ref{fig:pwlin_biot}.
\begin{figure}
  \begin{center}
        \includegraphics[width = 3.1in]{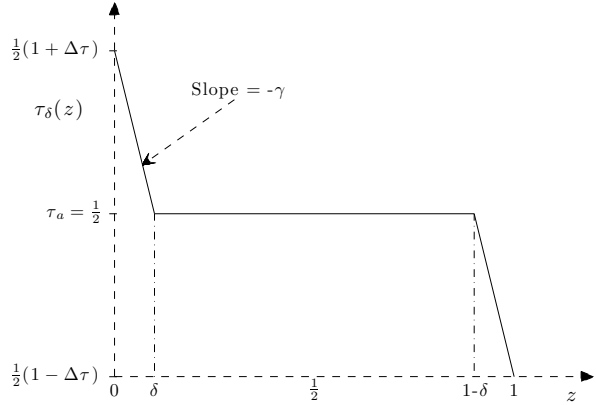}
    \caption{The piecewise linear background profile
      $\tau_{\delta}(z)$, with $\tau' = -\dztau$ in the boundary
      layer, and $\tau' = 0$ in the bulk.}
    \label{fig:pwlin_biot}
  \end{center}
\end{figure}%
Here the single parameter $\delta \leq 1/2$ may be interpreted as
modelling the thickness of the thermal boundary layer.
The intuition behind this definition is that in order for $\tau(z)$ to
be admissible, the indefinite term $\intf 2 \tau' w \theta$ in
$\Qtiltb [\bv, \theta]$ (see \eqref{eq:Qtbdef}, \eqref{eq:Qtiltbdef})
should be controlled by the other, positive terms.  With this choice
of background, $2 \tau' w \theta$ vanishes in the bulk of the domain,
and is nonzero only near the fluid boundaries, where $w$ is small.
Furthermore, since $\tau'$ is piecewise constant, explicit analytical
bounds are readily attainable, giving (non-optimal) rigorous bounds on
the Nusselt number.

From the definition \eqref{eq:taudelta}, we immediately compute
$\tau(0) = \avetau + \dztau \delta$, $\tau(1) = \avetau - \dztau
\delta$, and so 
\begin{equation}
  \label{eq:pwlvals}
  \dtau = \tau(0) - \tau(1) = 2 \delta \dztau, \qquad
  \int_0^1 \tau'^2 \, dz = 2 \delta \dztau^2 = \dztau \dtau ,
\end{equation}
where it remains to choose the average $\avetau = \hlf (\tau(0)
+ \tau(1))$ and boundary slope $\dztau = -\tau'(0) = -\tau'(1)$ of the
background as functions of $\delta$.
For the mixed (Robin) thermal BCs introduced in
Section~\ref{sec:mixedth}, substituting $\dtau = 2 \delta \dztau$
into the relation $\dtau + 2 \Bi \dztau = 1 + 2\Bi$
\eqref{eq:gammadtau} we obtain the value of $\dztau$ (for given
$\delta$ and $\Bi$), as well as the corresponding $\dtau$,
for which the piecewise linear profile \eqref{eq:taudelta} satisfies
the BCs:
\begin{equation}
  \label{eq:gammaBi}
  \dztau = \frac{1 + 2\Bi}{2(\delta + \Bi)} , \qquad
  \dtau = 2\delta \dztau = \frac{\delta (1+2\Bi)}{\delta+\Bi} ,
\end{equation}
which implies the related identity $1-\dtau = 2\Bi (\dztau - 1)$;
observe that since $\delta \leq 1/2$, we have $1 \leq \dztau \leq
1/2\delta$ and $\dtau \leq 1$.  Using this $\dztau$ in the BC
\eqref{eq:robbc} $\tau(0) = 1 + \Bi + \Bi \tau'(0)$ for $\Bi <
\infty$, we find that $\avetau = 1/2$ (so that we may write $\tau(0) =
\hlf (1 + \dtau)$, $\tau(1) = \hlf (1 - \dtau)$), completing the
specification of the background $\taud(z)$.

Substituting formulas \eqref{eq:pwlvals}--\eqref{eq:gammaBi} into
\eqref{iq:betabnd}--\eqref{iq:deltabnd} and simplifying, we now find
that the conservative bounds on $\dzT$ and $\dT$ for fixed Biot number
convection with a piecewise linear (pwl) background profile $\taud$
take the concise forms
\begin{align}
  \label{eq:dzTrobpwlin}
  \dzT \leq \BpwldzT{\Bi} (\delta,b) \equiv \BdzT_{\Bi} \taudbfunc & = 1 + b \,
  \frac{1}{2} \, \frac{1 - 
    2\delta}{\delta + \Bi} = 1 + b (\dztau - 1) , \\
  \label{eq:dTrobpwlin}
  \dT \geq  \BpwldT{\Bi} (\delta,b) \equiv \BdT_{\Bi} [\taud; b] & = 1 - b
  \, \Bi \, \frac{1 - 2\delta}{\delta + \Bi} = 1 + b (\dtau - 1) ,
\end{align}
and the corresponding upper bound on the Nusselt number is $\Nu \leq
\Nupwl{\Bi} (\delta,b) \equiv \Nubnd_{\Bi} \taudbfunc =
\BpwldzT{\Bi}(\delta,b)/\BpwldT{\Bi}(\delta,b)$.  Since $b > 0$,
these bounds satisfy $\BpwldzT{\Bi}(\delta,b) \geq 1$,
$\BpwldT{\Bi}(\delta,b) \leq 1$, and hence $\Nupwl{\Bi}(\delta,b) \geq
1$, as one might expect. 
Observe that the bounds $\BpwldzT{\Bi}(\delta,b)$ and
$\BpwldT{\Bi}(\delta,b)$ do not depend explicitly on the control
parameter $\R$, but rather indirectly through the admissibility
condition on $\delta$.

In the special case of fixed temperature BCs, for which $\dT = \dtau
=1$, we have $\dztau = 1/2\delta = \int_0^1 \tau'^2 \, dz$, and the
bound on $\dzT$, and hence on the Nusselt number, becomes
\begin{equation}
  \label{eq:dzTdirpwlin}
  \Nu = \dzT \leq \BpwldzT{0}(\delta,b) = 1 + b \left(
    \frac{1}{2\delta} - 1 \right) . 
\end{equation}
At the opposite extreme, the fixed flux BCs impose $\dzT = \dztau =
1$, so that we must choose $\dtau = 2\delta = \int_0^1 \tau'^2 \, dz$;
then the lower bound on $\dT$ (corresponding to an upper bound on
$\Nu$) is
\begin{equation}
  \label{eq:dTneupwlin}
  \Nu^{-1} = \dT \geq \BpwldT{\infty}(\delta,b) = 1 + b (2\delta - 1) .
\end{equation}
Again, in this formulation the fixed temperature and fixed flux cases
are the $\Bi \to 0$ and $\Bi \to \infty$ limits of the bounds for
general Biot number.  We note, however, that the thermal BCs of the form
\eqref{eq:robbc} do not specify the value of $\avetau$ in the fixed
flux case $\Bi = \infty$ (the governing equations depend only on
temperature gradients, not on their absolute values); here we choose
$\avetau = 1/2$ for convenience, but use a different choice in the
boundedness proof of Appendix~\ref{app:bounded}.

\subsection{Cauchy-Schwarz estimates on the quadratic form}
\label{ssec:CSestimates}

Recall the admissibility criterion for the background field
$\taud(z)$: $\Qcalt_{\taud,\Reff} [\bv,\theta] \geq 0$ for all allowed
$\bv$ and $\theta$, or in Fourier space (by
\eqref{iq:QQk}--\eqref{iq:QtQkt}) $\Qkt = \Qcalt_{\bk; \taud, \Reff}
[\whk,\thk] \geq 0$ for all $\bk$. For piecewise linear background
fields $\taud(z)$ of the form \eqref{eq:taudelta}, this criterion
reduces to a requirement that $\delta$ is sufficiently small, for
given $\Reff = b \R/(b-1)$.  Elementary Cauchy-Schwarz and Young
inequalities applied to the Fourier space quadratic form $\Qkt$ allow
us to derive explicit \emph{sufficient} conditions on $\delta$ so that
$\Qkt \geq 0$ for all $\bk$, and hence to estimate upper bounds on
$\Nu$.

We recall first that the boundary terms in $\Qcalt_{\taud,\Reff}
[\bv,\theta]$ for $\Bi \in (0,\infty)$ are nonnegative for Robin BCs
(and vanish if $\Bi = 0$ or $\Bi = \infty$), so that it is enough
to verify the admissibility criterion for $\Qcal_{\taud,\Reff}
[\bv,\theta]$ (see the discussion below \eqref{eq:thkrobbc}).
Equivalently in Fourier space, using \eqref{eq:thkbdyrobd} to evaluate
the boundary term in \eqref{eq:Qktdef} for $\Bi \in (0,\infty)$, we
have
\begin{equation}
  \label{iq:QktQkrob}
  \Qkt = \Qk + \Bi  \left( |\D \thk(0)|^2 + |\D \thk(1)|^2
  \right) \geq \Qk , 
\end{equation}
so that it suffices to obtain conditions on $\delta$ to ensure $\Qk
\geq 0$ for all $\bk$.
To do so, we need to control the only indefinite term in $\Qk$,
$\int_0^1 2\taud' \Real{\whk \conj{\thk}}$, by the other terms.  For
completeness we review the necessary estimates from \cite{OWWD02}:
Since $\whk$ and $\D \whk$ (and hence also $\whk \, \conj{\thk}$)
vanish at both boundaries, we have
\begin{equation}
  \label{iq:whkthkz}
  |\whk(z) \, \conj{\thk}(z)| = \left| \int_0^z \D \left( \whk
      \conj{\thk} \right) \, d\zeta \right| \leq \int_0^z | \whk
    \D \conj{\thk} | \, d\zeta + \int_0^z | \conj{\thk} \D
    \whk | \, d\zeta ,
\end{equation}
where for $0 \leq z \leq \hlf$, by the Fundamental Theorem of Calculus
and the Cauchy-Schwarz inequality we find that
\begin{align}
  \label{iq:whkCS}
  \left| \whk(z) \right| = \left| \int_0^z \D \whk \, d\zeta \right|
  & \leq \sqrt{z} \left( \int_0^z \left| \D \whk (\zeta) \right|^2 \,
    d\zeta\right)^{1/2} \leq \sqrt{z} \| \D \whk \|_{[0,\hlf]} , \\
  \label{iq:DwhkCS}
  \left| \D \whk(z) \right| = \left| \int_0^z \D^2 \whk \, d\zeta \right|
  & \leq \sqrt{z} \left( \int_0^z \left| \D^2 \whk (\zeta) \right|^2 \,
    d\zeta\right)^{1/2} \leq \sqrt{z} \| \D^2 \whk \|_{[0,\hlf]} . 
\end{align}

Substituting these estimates into \eqref{iq:whkthkz} and again
applying the Cauchy-Schwarz inequality, for $0 \leq z \leq \hlf$ we obtain 
\begin{align}
  |\whk(z) \, \conj{\thk}(z)| & \leq \left( \int_0^z \zeta \, d\zeta
  \right)^{1/2} \left[ \| \D \whk \|_{[0,\hlf]} \left( \int_0^z |\D
      \thk|^2 \, d\zeta \right)^{1/2} \right. \nonumber \\
  & \qquad \qquad \qquad \quad \ 
  \left. + \| \D^2 \whk \|_{[0,\hlf]}
    \left( \int_0^z |\thk|^2 \, d\zeta \right)^{1/2} \right] 
  \nonumber \\
  \label{iq:whkthkz2}
  & \leq \frac{z}{2\sqrt{2}} \left[ a_1 \| \D \whk
    \|_{[0,\hlf]}^2 + \frac{1}{a_1} \| \D \thk \|_{[0,\hlf]}^2 + 
    \frac{a_2}{k^2} \| \D^2 \whk \|_{[0,\hlf]}^2 + \frac{k^2}{a_2} \| \thk
    \|_{[0,\hlf]}^2 \right]  ,
\end{align}
where we have also applied Young's inequality $p q \leq \hlf (a_j p^2
+ q^2/a_j)$ for any $a_j > 0$.  Proceeding similarly, we obtain an
analogous estimate for $\hlf \leq z \leq 1$.  
For the piecewise linear background $\taud(z)$, for which $\tau' =
-\dztau < 0$ for $0 \leq z \leq \delta$ and $1-\delta \leq 1$, and $\tau'
= 0$ otherwise, applying these estimates we have
\begin{align*}
  \left| \int_0^1 \taud' \, \whk \conj{\thk} \, dz \right| & \leq
  \dztau \left( \int_0^{\delta} |\whk \conj{\thk}| \, dz +
    \int_{1-\delta}^1 |\whk \conj{\thk}| \, dz \right) \\
  & \leq \frac{\dztau \delta^2}{4 \sqrt{2}} \left[
    a_1 \| \D \whk \|_{[0,1]}^2 + \frac{1}{a_1} \| \D
    \thk \|_{[0,1]}^2 + \frac{a_2}{k^2} \| \D^2 \whk \|_{[0,1]}^2 +
    \frac{k^2}{a_2} \| \thk \|_{[0,1]}^2 \right] ,
\end{align*}
and thus
\begin{align}
  \int_0^1 2 \taud' \Real{\whk \conj{\thk}} \, dz & = \int_0^1 \taud'
  \left( \whk \conj{\thk} + \conj{\whk} \thk \right) \, dz
  \nonumber \\
  \label{iq:indef}
  & \geq - \frac{\dztau \delta^2}{2 \sqrt{2}} \left[
    a_1 \| \D \whk \|^2 + \frac{1}{a_1} \| \D
    \thk \|^2 + \frac{a_2}{k^2} \| \D^2 \whk \|^2 +
    \frac{k^2}{a_2} \| \thk \|^2 \right] ,
\end{align}
where norms are taken over the entire interval $[0,1]$ unless
otherwise indicated.  Substituting this estimate on the indefinite
term into $\Qk$ given by \eqref{eq:Qkdef}, we find
\begin{align*}
  \Qk & \geq \left( \frac{2}{\Reff} - \frac{\dztau
      \delta^2\, a_1}{2\sqrt{2}} \right) \| \D \whk\|^2 + \left(
    \frac{1}{\Reff} - \frac{\dztau \delta^2\, a_2}{2\sqrt{2}} \right)
  \frac{1}{k^2} \| \D^2 \whk \|^2 + \frac{1}{\Reff} k^2 \|\whk\|^2 \\
  & \qquad + \left( 1 - \frac{\dztau \delta^2}{2\sqrt{2}\, a_2}\right) k^2 \|
  \thk \|^2 + \left( 1 - \frac{\dztau \delta^2}{2\sqrt{2} \, a_1}
  \right) \| \D \thk \|^2 .
\end{align*}
In the absence of any additional \textit{a priori} information, for
instance on the decay rate of the Fourier coefficients (compare
\cite{CoDo96,Kers01}), our remaining estimates are necessarily
$k$-independent; we ensure the positivity of $\Qk$ by requiring all
coefficients to be nonnegative.  We choose $a_1 = a_2 = \dztau
\delta^2/2\sqrt{2}$; then, using \eqref{iq:QktQkrob} and dropping
manifestly nonnegative terms,
\begin{equation}
  \label{iq:Qkdconstr}
  \Qkt \geq \Qk \geq \left( \frac{2}{\Reff} - \frac{\dztau^2 \delta^4}{8}
  \right) \| \D \whk\|^2 + \left( \frac{1}{\Reff} - \frac{\dztau^2
      \delta^4}{8} \right) \frac{1}{k^2} \| \D^2 \whk \|^2 .
\end{equation}
We can thus guarantee that $\Qk \geq 0$ (and hence $\Qkt \geq 0$) if
we choose $\dztau^2 \delta^4/8 \leq 1/\Reff$.  
For given thermal BCs, $\dztau = \dztau(\delta)$ is specified as a
function of $\delta$; so this is a constraint on $\delta$ to have $\Qk
\geq 0$, that is, for $\taud(z)$ to be an admissible background.
Defining $\delta_c$ by
\begin{equation}
  \label{eq:deltaconstr}
  \dztau(\delta_c)^2 \delta_c^4 = \frac{8}{\Reff} = 8 \frac{b-1}{b \R} ,
\end{equation}
we obtain the best bound in this approach by choosing $\delta =
\delta_c$; the piecewise linear profile $\taud$ is admissible for any
$\delta \leq \delta_c$.

We observe that the estimates
\eqref{iq:whkthkz}--\eqref{iq:Qkdconstr}, and hence the sufficient
condition \eqref{eq:deltaconstr} on $\delta$, are independent of the
choice of boundary conditions on $\thk$ at $z = 0, 1$ (apart from
symmetry).  However, the thermal BCs enter the admissibility condition
on $\taud$ through the value \eqref{eq:gammaBi} of $\dztau =
\dztau(\delta)$ as a function of $\Bi$.

\section{Explicit asymptotic bounds for general thermal boundary
  conditions}
\label{sec:Bibound}

Using the piecewise linear background profile $\taud(z)$ and estimates
introduced in Section~\ref{sec:pwlinest}, we may now derive explicit
analytical bounds on the growth of the Nusselt number $\Nu$ with the
control parameter $\R$, and hence with the Rayleigh number $\Ra$, for
thermal boundary conditions with fixed Biot number $0 \leq \Bi \leq
\infty$.  We begin as usual by recalling the results for Dirichlet
($\Bi = 0$) and Neumann ($\Bi = \infty$) BCs, as in the general case
it then becomes apparent that the fixed temperature case is a singular
limit, while for \emph{any} $\Bi > 0$, the $\R \to \infty$ asymptotic
scaling is as in the fixed flux case.  In this Section we summarize
the main asymptotic bounds; more details including discussion of
different scaling regimes for $0 < \Bi < \infty$ will be given
elsewhere (\cite{WiGa08prep}; see also \cite{Gao06m}), while rigorous,
though somewhat weaker, bounds are proved in
Appendix~\ref{app:rigbounds}.

\subsection{Fixed temperature boundary conditions}
\label{ssec:dirbnd}

In the case of Dirichlet BCs, we have $\dT = \dtau = 1$, $\R = \Ra$,
and \eqref{eq:gammaBi} implies $\dztau = 1/2\delta$.  Thus the
sufficient condition \eqref{eq:deltaconstr} on $\delta$ simplifies to
$\delta \leq \delta_c$ where
\begin{equation}
  \label{eq:deltacdir}
  \delta_c^2 = \frac{32}{\Reff} = 32 \frac{b-1}{b \R} .
\end{equation}
One can show that the optimal choice of $b$ in this formulation is
$b_0 = 3/2$ (see \cite{WiGa08prep}), for which $\Reff = 3\R$, and hence
$\delta \leq \delta_c = 4 \sqrt{2/3} \, \R^{-1/2}$ is sufficient to
obtain a rigorous bound.  Since for this $b = b_0$,
\eqref{eq:dzTdirpwlin} becomes
\begin{equation}
  \label{eq:Nudpwldir}
  \Nu = \dzT \leq \BpwldzT{0} (\delta,b_0) = 1 - b_0 +
  \frac{b_0}{2\delta} = -\frac{1}{2} + \frac{3}{4\delta} ,
\end{equation}
for any $\delta \leq \delta_c$, the best rigorous analytical bound on
the Nusselt number using this approach is
\begin{equation}
  \label{eq:Nubnddir}
  \Nu \leq \tilde{\Nubnd}_{\text{pwl}} (\R) = \BpwldzT{0}
  (\delta_c,b_0) =   -\frac{1}{2} + \frac{3}{4\delta_c} =
  -\frac{1}{2} + \frac{3}{16} \sqrt{\frac{3}{2}} \R^{1/2} =
  - \frac{1}{2} + \frac{3 \sqrt{6}}{32} \Ra^{1/2} ,
\end{equation}
where we used the fact that for fixed temperature BCs, the control
parameter $\R$ is the usual Rayleigh number $\Ra$.

\subsection{Fixed flux boundary conditions}
\label{ssec:neubnd}

In the opposite extreme, for Neumann BCs, we have $\dzT = \dztau = 1$,
and we bound $\dT$ from below using \eqref{eq:dTneupwlin}.  Since $b >
1$, in order for the lower bound $\BpwldT{\infty}(\delta,b) = 1 - b +
2 \delta \, b$ on $\dT$ to remain positive as $\R \to \infty$ and
hence $\delta \to 0$, we need $b - 1 = \bigO(\delta)$.  Thus following
\cite{OWWD02} we choose $b = 1 + c \, \delta$ and let $c$ take its
optimal value $\cinf = 1/2$, so that \eqref{eq:dTneupwlin} becomes
\begin{equation}
  \label{eq:Nudpwlneu}
  \Nu^{-1} = \dT \geq \BpwldT{\infty}(\delta,1+\cinf \delta) = 1 +
  (1+\delta/2) (2\delta - 1) = \frac{3}{2} \delta + 
    \delta^2 \sim \frac{3}{2} \delta .
\end{equation}
The condition on $\delta$ is as usual $\delta \leq \delta_c$, where
with $\dztau = 1$ and $b = 1+\delta/2$, the equation
\eqref{eq:deltaconstr} satisfied by $\delta_c$ takes the form
\begin{equation}
  \label{eq:deltacneu}
  \delta^4 = \frac{8}{\Reff} = 4 \frac{\delta}{1 + \delta/2}
  \R^{-1} \sim 4 \frac{\delta}{\R} 
\end{equation}
for large $\R$, for which $\delta \to 0$; and hence $\delta_c \sim
4^{1/3} \R^{-1/3}$.  Thus we have (using \eqref{eq:RRa})
\begin{align*}
  \Nu^{-1} = \dT & \geq \BpwldT{\infty}(\delta_c,1+\delta_c/2) \sim
  \frac{3}{2} \delta_c \sim \frac{3}{2^{1/3}} \R^{-1/3} , \\
  \Ra = \R \dT & \geq \R \, \BpwldT{\infty}(\delta_c,1+\delta_c/2)
  \sim \frac{3}{2^{1/3}} \R^{2/3} , 
\end{align*}
and so
\begin{equation}
  \label{eq:Nubndneu}
  \Nu \leq \tilde{\Nubnd}_{\text{pwl}} (\R) 
  \lesssim \frac{2^{1/3}}{3} \R^{1/3} \lesssim \sqrt{\frac{2}{27}} \Ra^{1/2} , 
\end{equation}
as in \cite{OWWD02}. 
Note the scaling $\Nu \leq C_1 \R^{1/3}$ in terms of the control
parameter $\R$, which translates to the usual scaling $\Nu \leq C_2
\Ra^{1/2}$.

\subsection{Mixed thermal boundary conditions with fixed Biot number}
\label{ssec:robbnd}

For general mixed (Robin) thermal BCs with fixed Biot number, we need
to estimate both $\dT$ and $\dzT$, using \eqref{eq:dzTrobpwlin} and
\eqref{eq:dTrobpwlin}, where $\dztau$ and $\dtau$ are given in terms
of $\Bi$ and $\delta$ by \eqref{eq:gammaBi}.  
The sufficient condition $\delta \leq \delta_c$ for $\tau_{\delta}$ to
be admissible, derived via the Cauchy-Schwarz estimates of
Section~\ref{ssec:CSestimates}, is that $\delta_c$ satisfies
\eqref{eq:deltaconstr}, which (substituting for $\dztau$ from
\eqref{eq:gammaBi}) here takes the form
\begin{equation}
  \label{eq:deltacrob}
  \dztau^2 \delta^4 = \frac{(1+2\Bi)^2}{4 (\delta+\Bi)^2} \delta^4 =
  \frac{8}{\Reff} = 8 \frac{b-1}{b} \R^{-1} .
\end{equation}
We shall see that in this general case with $0 < \Bi < \infty$,
depending on the relative sizes of $\delta$ and $\Bi$, the scaling of
the bounds behaves either as in the fixed temperature limit (for
$\delta \geq \Bi$) or the fixed flux limit (for $\delta \leq \Bi$);
but that for any $\Bi > 0$, the asymptotic scaling as $\R \to \infty$
is as for fixed flux boundary conditions:

\subsubsection{The fixed temperature problem $\Bi = 0$ as a singular limit:}
\label{sssec:singulardir}

Recall that for Dirichlet thermal boundary conditions $\Bi = 0$, we
have $\dT = \dtau = 1$, so that we obtain an upper bound on $\Nu$ for
\emph{any} $b > 0$ (there is no concern that the lower bound
$\BpwldT{0}$ on $\dT$ may become negative), and we can choose $b - 1 =
\bigO(1)$ for all $\delta$.  In this case $\Bi = 0$, though, $\dztau =
1/2\delta$ is not bounded above as $\R \to \infty$ ($\delta \to 0$),
and hence neither is $\dzT$; the growth in the (upper bound for) the
Nusselt number in the fixed temperature case with increasing control
parameter $\R = \Ra$ is due to that of the (non-dimensional) boundary
heat flux.

The situation is quite different for any nonzero Biot number $\Bi$:
since $0 < \delta \leq 1/2$, we have $0 \leq (1-2\delta)/2(\delta+\Bi)
= \dztau - 1 < 1/2\Bi$, so that now $\dztau$ \emph{is} bounded above
as $\delta \to 0$ for $\Bi > 0$.  On the other hand, $\dtau = 2\delta
\dztau$ is not bounded away from zero, so that since $b > 1$, to get a
positive value for the lower bound $\BpwldT{\Bi} = 1 - b + b\dtau$ for
sufficiently large $\R$ (small $\delta$), we need $b - 1 =
\bigO(\delta)$ for each fixed $\Bi > 0$.  Furthermore, we have $\dT
\to 0$ as $\R \to \infty$, so that (for sufficiently large $\R$) the
growth in the Nusselt number bound is due to the decrease in $\dT$,
the (non-dimensional) averaged temperature drop across the fluid,
rather than due to the growth in $\dzT$.  That is, for any $\Bi > 0$
the (asymptotic) behaviour and scaling is as in the fixed flux case;
the fixed temperature problem is a \emph{singular limit}.  (A similar
observation was made in the context of horizontal convection by
\cite{SKB04}.)

\subsubsection{Scaling regimes:}
\label{sssec:scregimes}

More precisely, the nature of the $\Nu$-$\R$ scaling depends on
whether $\delta \geq \Bi$ or $\delta \leq \Bi$, and hence on the value
of $\Bi$: 

For \emph{sufficiently large Biot number} (largely insulating
boundary) $\Bi \geq 1/2$, we always have $\delta \leq \Bi$.  Since for
such $\Bi$, $\dztau$ is approximately constant ($1 \leq \dztau < 1 +
1/2\Bi \leq 2$; compare $\dztau = 1$ for $\Bi = \infty$), we see from
\eqref{eq:deltacrob} that a sufficient admissibility condition for
$\taud$ is $\delta \leq \delta_c = \bigO(\Reff^{-1/4})$, as in the
fixed flux case.  We choose $b = 1 + c \delta \leq 3/2$ for some $c \leq 1$, so
$\dzT \leq 1 + b(\dztau - 1) \leq 5/2$ for all $\delta \leq 1/2$, and
there is no transition in scaling regimes; as in the fixed flux case,
for all sufficiently large $\Bi$ the growth in $\Nu$ is due to the
decrease in $\dT$.

For \emph{relatively small Biot number} (largely conducting boundary)
$\Bi < 1/2$, on the other hand, it is possible to have $\delta \geq
\Bi$ for low enough thermal driving, and thus distinct regimes
exist.  In particular, consider the case of small Biot number ($\Bi
\ll 1$, near the fixed temperature limit), where we can identify two
distinct scaling behaviours:
\begin{itemize}
\item ``Fixed temperature scaling'': For sufficiently small $\R$, we
  have $\delta \gg \Bi$, so that $\dztau \sim 1/2\delta$ and $\dtau
  \sim 1$,\footnote{Proceeding more carefully, for $\delta \geq \Bi$,
    we have $1/4\delta \leq \dztau = (1+2\Bi)/2(\delta + \Bi) \leq
    1/\delta$ and $1/2 \leq \dtau = \delta(1+2\Bi)/(\delta+\Bi) \leq
    1$} and the sufficiency condition \eqref{eq:deltacrob} is $\delta
  \leq \delta_c = \bigO(\Reff^{-1/2})$.  Since $\dtau$ is bounded
  below away from zero, so is the lower bound $\BpwldT{\Bi}(\delta,b)
  = 1 + b(\dtau - 1) \geq 1 - b/2$ on $\dT$ for any fixed $b < 2$.
  Thus we may obtain a bound on $\Nu$ in this regime by choosing any
  $b \in (1,2)$, and by comparison with the fixed temperature problem,
  it is sufficient to choose $b - 1 = \bigO(1)$, in which case we have
  $\delta_c = \bigO([(b-1)/b\R]^{1/2}) = \bigO(\R^{-1/2})$.  While
  $\dT = \bigO(1)$ (so that $\Ra = \bigO(\R)$), we have that
  $\dzT \leq \BpwldzT{\Bi}(\delta,b) = 1 + b(\dztau - 1) =
  \bigO(b/\delta)$ grows as $\delta^{-1}$.  Thus clearly when $\Bi \ll
  1$, for sufficiently small but increasing $\R$, the scaling
  properties are as in the fixed temperature case, and the growth in
  $\Nu = \dzT/\dT$ is driven by that of $\dzT$.

  As the driving $\R$ increases, $\delta$ decreases, and eventually
  becomes less than the Biot number $\Bi$; based on the fixed
  temperature scaling $\delta = \bigO(\R^{-1/2}) = \bigO(\Ra^{-1/2})$
  the transition at $\delta = \Bi$ occurs when $\Bi = \bigO(\Ra^{-1/2})$,
  or $\Ra = \bigO(\Bi^{-2})$.

\item ``Fixed flux scaling'': Once the ``boundary layer thickness''
  $\delta$ has decreased below $\Bi > 0$ for increasing $\R$, we enter
  another regime (which does not exist in the fixed temperature case
  $\Bi = 0$), in which for fixed $\Bi$ the growth in $\dztau$
  saturates, while $\dtau = \bigO(\delta)$ decreases.  Asymptotically
  for $\delta \ll \Bi$, we have $\dztau \sim (1+2\Bi)/2\Bi =
  \dztau_{\text{max}}(\Bi)$, while $\dtau \sim \delta(1+2\Bi)/\Bi$,
  and for each fixed $\Bi > 0$ the behaviour is now as if we had
  Neumann thermal BCs.\footnote{More precisely, for $\delta \leq \Bi$,
    we have $\dztau_{\text{max}}/2 = (1+2\Bi)/4\Bi \leq \dztau <
    (1+2\Bi)/2\Bi = \dztau_{\text{max}}$, and $\delta
    \dztau_{\text{max}} = \delta(1+2\Bi)/2\Bi \leq \dtau <
    \delta(1+2\Bi)/\Bi = 2\delta \dztau_{\text{max}}$.}

  More generally, for $0 < \Bi \leq 1/2$ and decreasing $\delta \leq
  \Bi$, we have $\dztau = \bigO(\Bi^{-1})$ and $\dtau =
  \bigO(\delta/\Bi)$.  In order for the lower bound $\BpwldT{\Bi} = 1
  - b + b\dtau$ on $\dT$ to remain positive as $\delta \to 0$, we must
  choose $b = 1 + \bigO(\delta/\Bi)$, so that $\dzT \leq \BpwldzT{\Bi}
  = 1 + b(\dztau - 1) = \bigO(\Bi^{-1})$ saturates, while $\dT \geq
  \bigO(\delta/\Bi)$; hence the growth in $\Nu$ is now due to the
  decay in $\dT$, as in the fixed flux case.  In this regime the
  scaling behaviours are $\Ra \geq \bigO(\delta \R/\Bi)$, $\Reff =
  \bigO(\Bi \R/\delta)$ and $\delta = \bigO(\dztau^{-1/2}
  \Reff^{-1/4}) = \bigO(\Bi^{1/3} \R^{-1/3}) = \bigO(\Ra^{-1/2})$;
  more precise asymptotic statements are given below, with weaker, but
  rigorous results in Appendix~\ref{app:rigbounds}.

\end{itemize}

\subsubsection{Asymptotic scaling of bounds for $0 < \Bi < \infty$:}
\label{sssec:biotbnds}

Having outlined the behaviour in the different regimes, we
here derive the scaling of the bound on the Nusselt number in the limit
of large driving, $\R \to \infty$, so that $\delta \ll 1$ and $\delta
\ll \Bi$, deferring a more detailed discussion of scaling behaviour in
the different regimes using this Cauchy-Schwarz analysis, and a
comparison with numerical solutions for piecewise linear backgrounds,
to \cite{WiGa08prep}.

In the light of the above discussion, for $\delta \ll \Bi$ we must
choose $b = 1 + c\, \delta$, where the optimal value of $c$ turns out to
be
\begin{equation}
  \label{eq:cBi}
  \cBi = \frac{1+2\Bi}{4\Bi}  .
\end{equation}
Using this optimal choice of $b$, the lower bound
\eqref{eq:dTrobpwlin} on $\dT$ becomes
\begin{align}
  \dT \geq \BpwldT{\Bi} (\delta,1+\cBi \delta) & = - \cBi \delta +
  (1+\cBi \delta) \frac{\delta (1+2\Bi)}{\delta + \Bi} \nonumber \\
  \label{eq:dTdpwlrob}
  & = \frac{\delta (1+2\Bi)}{\delta + \Bi} \frac{3 + 2\delta}{4} 
  \sim \frac{3}{4} \frac{\delta (1+2\Bi)}{\Bi} ,
\end{align}
while similarly, the upper bound \eqref{eq:dzTrobpwlin} is
\begin{align}
  \dzT \leq \BpwldzT{\Bi} (\delta, 1+\cBi \delta) & = - \cBi \delta +
  (1 + \cBi \delta) \frac{1+2\Bi}{2(\delta + \Bi)} \nonumber \\
  \label{eq:dzTdpwlrob}
  & = \frac{1+2\Bi}{2(\delta + \Bi)} \left[ 1 + \frac{\delta}{4\Bi} (
    1 - 2\delta) \right] 
  \sim \frac{1+2\Bi}{2\Bi} ,
\end{align}
so that an upper bound on the Nusselt number for admissible $\delta
\ll \Bi$ is
\begin{equation}
  \label{eq:Nudpwlrob}
  \Nu = \frac{\dzT}{\dT} \leq \Nupwl{\Bi} (\delta, 1 + \cBi \delta) =
  \frac{1}{2\delta} \frac{4 + \delta(1-2\delta)/\Bi}{3 + 2\delta}
  \sim \frac{2}{3\delta} ; 
\end{equation}
compare \eqref{eq:Nudpwldir} and \eqref{eq:Nudpwlneu}. 

Observe that the width $\delta_{\text{BL}}$ of the thermal boundary
layer is often related to the Nusselt number via $\delta_{\text{BL}} =
(2\Nu)^{-1}$ (\cite{NiSr06}); our high-$\R$ result for the piecewise
linear background, $\delta \sim (3\Nu/2)^{-1}$ for $\Bi > 0$ (or
$\delta \sim (4\Nu/3)^{-1}$ for $\Bi = 0$), may be interpreted as a
systematic statement of such a boundary layer model.

Returning to the computation of asymptotic bounds, we note from
\eqref{eq:cBi} that for $\Bi \geq 1/2$, $\cBi = 1/2 + 1/4\Bi \leq 1$,
while for $\Bi \leq 1/2$, $\cBi \delta = (1+2\Bi) \delta/4\Bi \leq
\delta/2\Bi$, so that whenever $\delta \ll \min(\Bi,1)$ we have $\cBi
\delta \ll 1$; consequently $b = 1 + \cBi \delta \sim 1$ and $\Reff =
b \, \R /(b-1) \sim \R/\cBi \delta$.  In this case the condition
\eqref{eq:deltacrob} is thus
\begin{equation}
  \label{eq:deltacrob2}
  \delta^4 = 32 \frac{(\delta+\Bi)^2}{(1+2\Bi)^2} \Reff^{-1} 
  \sim 32 \frac{\Bi^2}{(1+2\Bi)^2} \frac{1+2\Bi}{4\Bi} \delta
  \R^{-1} = 8 \frac{\Bi}{1+2\Bi} \delta \R^{-1} ,
\end{equation}
or $\delta_c \sim 2 \Bi^{1/3} (1+2\Bi)^{-1/3} \R^{-1/3}$.
Substituting into the above bounds, we have
\begin{align}
  \label{eq:NuRbndrob}
  \Nu \leq \Nupwl{\Bi} (\delta_c, 1 + \cBi \delta_c) & \sim
  \frac{2}{3\, \delta_c} \sim \frac{1}{3} \left( \frac{1+2\Bi}{\Bi}
  \right)^{1/3} \R^{1/3} , \\
  \label{eq:RaRbndrob}
  \Ra = \R \dT \geq \R \BpwldT{\Bi} (\delta_c, 1 + \cBi \delta_c) & \sim
  \frac{3}{4} \frac{1+2\Bi}{\Bi} \, \delta_c \, \R \sim \frac{3}{2} \left(
    \frac{1+2\Bi}{\Bi} \right)^{2/3} \R^{2/3} ,
\end{align}
so that we obtain a bound on the asymptotic scaling as $\R \to \infty$
of the Nusselt number with the Rayleigh number whenever $\Bi > 0$:
\begin{equation}
  \label{eq:Nubndrob}
  \Nu \lesssim \frac{1}{3} \left( \frac{1+2\Bi}{\Bi} \right)^{1/3}
  \sqrt{\frac{2}{3}} \left( \frac{\Bi}{1+2\Bi} \right)^{1/3} \Ra^{1/2} =
  \sqrt{\frac{2}{27}} \Ra^{1/2} ,
\end{equation}
\emph{independent of the Biot number}.  Observe in particular, by
comparison with \eqref{eq:Nubndneu}, that the prefactor $\sqrt{2/27}$
is the same as for the fixed flux problem.


\section{Conclusions}
\label{sec:concl}

In formulating the energy identities and bounding problem for the
Rayleigh-B\'enard model with finite Prandtl number and general
thermal BCs at the upper and lower boundaries of the fluid, we have
demonstrated that the fixed temperature and fixed flux extremes may
indeed be treated as special cases of a more general model, within
which one can rigorously prove energy boundedness and bounds on
convective heat transport, and obtain asymptotic scaling results; we
expect that such an approach may be applicable to other related
convection problems.

While the scaling of these analytical bounds on the $\Nu$--$\Ra$
relationship remains well above that observed experimentally or in
direct numerical simulations, some of the qualitative conclusions may
be instructive.  Of particular interest is that---at least for the
piecewise linear backgrounds $\taud(z)$ treated here---while for each
fixed $\R$ the bounds depend smoothly on $\Bi$ for $0 \leq \Bi \leq
\infty$, the asymptotic $\R \to \infty$ scaling of the bound for any
nonzero Biot number is as for the $\Bi = \infty$ fixed flux problem.
That is, the limits $\Bi \to 0$ and $\R \to \infty$ do not commute:
fixed temperature conditions in fact form a singular limit.

Furthermore, the bounding calculation indicates the existence of two
distinct scaling behaviours for sufficiently small nonzero $\Bi$; it
would be of interest to observe these in fixed Biot number direct
numerical simulations: For small Rayleigh number $\Ra$, there is a
``fixed temperature scaling regime'' in which the usual assumption of
Dirichlet thermal BCs is approximately valid, as the growth in the
convective heat transport measured by $\Nu$ is largely due to the
increase in the averaged boundary heat flux $\dzT$.  As the control
parameter $\R$, and hence $\Ra$, increases, a transition occurs when
the ``boundary layer width'' $\delta$ becomes comparable to $\Bi$ (in
our calculations this occurs for $\Ra = \bigO(\Bi^{-2})$), beyond
which a ``fixed flux scaling regime'' is entered, in which further
increases in $\Nu$ are driven by decreases in the averaged temperature
drop $\dT$.  This observation provides mathematical support for the
heuristic argument that when the Nusselt number is sufficiently high,
the boundaries act effectively as insulators.

In the three-dimensional simulations of \cite{VeSr08} in cylindrical
geometry with perfectly insulating sidewalls and a perfectly
conducting upper boundary, replacing the lower fixed temperature BCs
with fixed flux conditions was observed to have little effect on the
heat transport for a given Rayleigh number for sufficiently small $\Ra
\lesssim 10^9$, and to decrease the transport for $\Ra > 10^9$.  In
constrast, the two-dimensional, horizontally periodic computations of
\cite{JoDo08} showed essentially identical heat transport for fixed
temperature and fixed flux BCs at both upper and lower plates for
$10^7 \lesssim \Ra \lesssim 10^{10}$.  In this context we observe that
the prefactor in our asymptotic analytical bound $\Nu \leq C\,
\Ra^{1/2}$ increases from $C_0 = 3\sqrt{6}/32 \approx 0.230$ to
$C_{\Bi} = C_{\infty} = \sqrt{2/27} \approx 0.272$ for $\Bi > 0$; that
is, within the framework of our upper bounding calculations with
piecewise linear background it appears that the estimates on the heat
transport \emph{increase} when the boundaries are not perfectly
conducting.  It remains to determine whether this increase is an
artifact of the choice of background $\tau(z)$ or of the background
flow bounding approach in general.  A further consideration is how the
presence of finite width conducting plates (see Part 2 of this work)
modifies conclusions obtained with the fixed Biot number
simplification.


\vspace{2ex}

\subsubsection{Acknowledgments}
\label{sec:ack}

I would like to thank Charlie Doering, Jian Gao, Jesse Otero and
Jean-Luc Thiffeault for useful discussions concerning this work.  This
research was partially supported by grants from the Natural Sciences
and Engineering Research Council of Canada (NSERC).


\appendix

\section{Boundedness of $\|T\|^2$ and $\|\bu\|^2$}
\label{app:bounded}

For completeness of the rigorous argument, we show the uniform
boundedness of the temperature and velocity fields, which we may state
as a theorem:
\begin{thm}
\label{thm:unifbnd}
The velocity field $\bu$ and temperature field $T$ satisfying
\eqref{eq:Bous}--\eqref{eq:heatfluid} for finite Prandtl number, $0 <
\Prandtl < \infty$, and for no-slip velocity boundary conditions and
thermal boundary conditions of general Biot number $0 \leq \Bi \leq
\infty$, are uniformly bounded in $\Ltwo$.
\end{thm}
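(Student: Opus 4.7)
The plan is to control both fields simultaneously by isolating the nonhomogeneous data of the problem in a steady conduction-type background. Write $T(\bx,t) = \phi(z) + \psi(\bx,t)$ with $\phi(z) = 1-z$, which satisfies the Robin BCs \eqref{eq:robbc} for all $0 \le \Bi < \infty$, so that $\psi$ satisfies the homogeneous Robin BCs $\psi - \Bi\, \psi_z = 0$ at $z=0$ and $\psi + \Bi\, \psi_z = 0$ at $z=1$. Since $\phi'' = 0$ and $\phi' = -1$, the fluctuation obeys $\psi_t + \bu \cdot \del \psi = \lap \psi + w$, with $\bu$ still governed by \eqref{eq:Bous}. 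Because $\have{w}(z) \equiv 0$ (from incompressibility, horizontal periodicity, and no-slip), $\intf w \phi(z) = 0$, so the buoyancy forcing in \eqref{eq:fluiden} reduces to $R \intf w \psi$.

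First I would derive the two energy identities. Multiplying the $\psi$-equation by $\psi$, integrating, and integrating by parts (advection drops by incompressibility and no-slip), the homogeneous Robin BCs give the boundary contribution $A \zlim{\have{\psi \psi_z}} = -A \Bi (\have{\psi_z^2}|_{z=0} + \have{\psi_z^2}|_{z=1}) \leq 0$, which is stabilizing. Hence
\begin{equation*}
  \tfrac{1}{2} \tfrac{d}{dt} \Ltf{\psi}^2 \leq - \Ltf{\del \psi}^2 + \intf w \psi,
  \qquad
  \tfrac{1}{2\Prandtl} \tfrac{d}{dt} \Ltf{\bu}^2 = - \Ltf{\del \bu}^2 + R \intf w \psi.
\end{equation*}
I would then form a weighted combination $E(t) = \Ltf{\psi}^2 + \mu \Prandtl^{-1} \Ltf{\bu}^2$ for an $\mu > 0$ to be chosen, bound the coupling term via Cauchy-Schwarz and the no-slip Poincaré inequality $\Ltf{w} \leq C_P \Ltf{\del \bu}$, apply Young's inequality to absorb the $\Ltf{\del \bu}$ factor into the dissipation, and pick $\mu$ so that what remains is dominated by $\Ltf{\del \psi}^2 + \tfrac12 \mu \Ltf{\del \bu}^2$ plus a constant multiple of $\Ltf{\psi}^2$.

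To close via Gronwall, I need a Poincaré inequality of the form $\Ltf{\psi}^2 \leq C(\Bi) \Ltf{\del \psi}^2$ for functions obeying the homogeneous Robin BCs. In Fourier space this is immediate for each horizontal mode $\bk \neq \bO$ via $\Ltf{\psi}^2 \leq k^{-2} \Ltf{\del \psi}^2$; for the horizontal mean mode $\have{\psi}(z)$, the Robin BCs provide enough one-dimensional vertical control (a Friedrichs-type estimate $\int_0^1 \have{\psi}^2 dz \leq C(\Bi) \int_0^1 \have{\psi}_z^2 dz$ derived from $\have{\psi}(0) = \Bi \have{\psi}_z(0)$ and the fundamental theorem of calculus). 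Combining these bounds yields $\dot{E}(t) \leq -c E(t) + 0$ for some $c = c(\Bi,R,\Prandtl) > 0$, so $E$ is uniformly bounded, and therefore so are $\Ltf{T}^2 \leq 2 \Ltf{\phi}^2 + 2 \Ltf{\psi}^2$ and $\Ltf{\bu}^2$.

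The main obstacle is the fixed-flux limit $\Bi = \infty$, where the above $C(\Bi)$ diverges and the homogeneous boundary condition for $\psi$ degenerates to Neumann, for which no Poincaré inequality holds on constants. I would handle this case separately: using \eqref{eq:tfave} with $\have{T}_z|_{0,1} = -1$ shows $\tpdt (\intf T) = 0$, so the mean $\have{\have{T}}$ is conserved; choosing $\phi(z) = 1-z - \have{\have{T(\cdot,0)}} + 1/2$ (or equivalently subtracting the conserved volumetric mean) produces a $\psi$ with zero horizontal-mean $k=\bO$ mode and Neumann BCs, to which the standard Poincaré inequality applies, and the rest of the argument proceeds as above. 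The smooth dependence on $\Bi$ for $0 \leq \Bi < \infty$ together with this endpoint treatment covers the full range claimed in the theorem.
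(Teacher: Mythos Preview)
Your choice of the linear conduction profile $\phi(z)=1-z$ as background is the source of a genuine gap. With $\phi'=-1$, the fluctuation equation acquires the forcing $-w\phi'=w$ at full strength, so the cross term $\intf w\psi$ enters the thermal energy identity with an $O(1)$ coefficient that cannot be tuned away. After forming $E=\|\psi\|^2+\mu\Prandtl^{-1}\|\bu\|^2$ and applying your Cauchy--Schwarz/Poincar\'e/Young chain, the best one obtains is
\[
  \tfrac{1}{2}\dot E \;\le\; -\|\del\psi\|^2 - \tfrac{\mu}{2}\|\del\bu\|^2
  + \frac{(1+\mu R)^2 C_P^2}{2\mu}\,\|\psi\|^2 ,
\]
and optimizing the last coefficient over $\mu>0$ still leaves it of order $R$. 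For $R$ larger than an $O(1)$ threshold this overwhelms any Poincar\'e estimate $\|\del\psi\|^2\ge\lambda(\Bi)\|\psi\|^2$, and your claimed inequality $\dot E\le -cE$ fails. Indeed, $\dot E\le -cE$ with \emph{no} constant term would mean exponential decay of every perturbation of the conduction state for arbitrary $R$, i.e.\ global nonlinear energy stability for all Rayleigh numbers, which is false.

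The paper's proof avoids this precisely by \emph{not} using $\phi=1-z$: it takes a piecewise linear background $\tau_\delta$ whose slope is concentrated in boundary layers of width $\delta=\delta(R)$. Because $w$ and $w_z$ vanish at the no-slip walls, the analogous cross term $\intf\tau' w\theta$ is then supported where $w$ is small and can be bounded (via Fourier-space estimates exploiting $\whk=\D\whk=0$ at $z=0,1$) by derivative norms with coefficients of order $\dztau\delta^2\sim R^{-1}$; the buoyancy cross term is handled by choosing the weight $\mu\sim R^{-1}$. The price is that the linear-in-$\theta$ contributions $-\intf\tau'\theta_z+A\zlim{\have{\theta}\tau'}$ no longer cancel and instead produce a \emph{constant}, yielding $\dot E\le -cE+C$ and hence uniform boundedness by Gronwall---but not decay, consistently with the physics. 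Your Robin Poincar\'e discussion and the separate treatment of the Neumann endpoint are fine; the defect lies entirely in the background choice and the missing inhomogeneous term.
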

\begin{remark}
Such boundedness has already been shown for Rayleigh-B\'enard
convection with fixed temperature BCs by \cite{Kers01}, following the
underlying approach introduced by \cite{DoCo92} (based on an idea of
\cite{Hopf41}) in the context of shear flow.  However, in both of
these cases the Dirichlet boundary conditions allow ready control of
the indefinite term in real space; since for general thermal BCs we
are not assured control of $\theta$ at the fluid boundaries, in our
proof instead we use incompressibility and Fourier space estimates
based on those of \cite{OWWD02}.
\end{remark}

\begin{proof}
  We begin the demonstration of Theorem~\ref{thm:unifbnd} by reviewing
  the basic problem formulation and identities: With thermal boundary
  conditions imposed at the upper and lower limits of the fluid, we
  consider horizontally periodic temperature and velocity fields
  $T(\bx,t)$ and $\bu(\bx,t)$ satisfying
  \eqref{eq:Bous}--\eqref{eq:heatfluid}, where $\bu$ satisfies
  incompressibility and no-slip BCs.  Choosing a background
  temperature field $\tau(z)$ which satisfies the given thermal
  boundary conditions, we define $\bv(\bx,t)$ and $\theta(\bx,t)$ via
  the decomposition \eqref{eq:decomp}, $\bu(\bx,t) = \bv(\bx,t)$,
  $T(\bx,t) = \tau(z) + \theta(\bx,t)$, and thus obtain the evolution
  equations \eqref{eq:vL2}--\eqref{eq:thL2} for their $\Ltwo$ norms:
\begin{align}
  \label{eq:AvL2}
  \frac{1}{2 \Prandtl \R} \frac{d}{dt} \Ltf{\bv}^2 & = - \frac{1}{\R}
  \Ltf{\del \bv}^2 + \intf w \theta , \\
  \label{eq:AthL2}
  \frac{1}{2} \frac{d}{dt} \Ltf{\theta}^2  & = - \Ltf{\del \theta}^2 +
  A \zlim{\have{\theta \theta_z}} - \intf \theta_z
  \tau' + A \zlim{\have{\theta} \tau'} - \intf w \theta \tau' .
\end{align}
We form the linear combination \eqref{eq:AvL2} + $\muwt
\cdot$\eqref{eq:AthL2}, where the weight $\muwt$ will be chosen later:
\begin{align}
  \frac{1}{2} \frac{d}{dt} \left[ \Ltf{\theta}^2 + \frac{\muwt}{\Prandtl
      \R} \Ltf{\bv}^2 \right] & = - \frac{\muwt}{\R} \Ltf{\del \bv}^2 +
  \muwt \intf w \theta - \intf \tau' w \theta -
  \Ltf{\del \theta}^2 \nonumber \\
  \label{eq:AL2evol}
    & \qquad \ - \intf \tau' \theta_z + A \zlim{\have{\theta \theta_z}} + A
    \zlim{\have{\theta} \tau'} .
\end{align}

We choose, as before, a piecewise linear $\tau(z)$, defined as in
\eqref{eq:taudelta} and figure~\ref{fig:pwlin_biot} for $0 < \delta \leq
1/2$:
\begin{equation}
  \label{eq:Ataudelta}
  \tau(z) = \left \{
    \begin{array}{ll}
      \avetau - \dztau (z - \delta) , & 0 \leq z \leq \delta , \\
      \avetau , & \delta < z < 1 - \delta , \\
      \avetau - \dztau(z - 1 + \delta) , & 1-\delta \leq z \leq 1 ; \\
    \end{array}
  \right. 
\end{equation}
As in Section~\ref{ssec:pwlinear}, we find using the thermal BCs that
\eqref{eq:gammaBi} $\dztau = -\tau'(0) = -\tau'(1) =
(1+2\Bi)/2(\delta+\Bi)$, while for $\Bi < \infty$ we also have
$\avetau = 1/2$; for now we defer the (at present arbitrary) choice of
$\avetau$ in the fixed flux case $\Bi = \infty$.

\subsection{Estimates independent of thermal BCs:} 
\label{sapp:estindep}

The estimates on the indefinite quadratic terms are performed in
Fourier space using the definition \eqref{eq:wthk}, while other terms
are readily controlled in real space; thus we split the dissipative
terms as
\begin{equation}
  \label{eq:Agradth}
  \Ltf{\del \theta}^2 = \left( \half  + \half\right) \intf |\del
  \theta|^2 = \half \Ltf{\del \theta}^2 + \half A \sum_{\bk}
  \int_0^1 \left( k^2 |\thk|^2 + |\D \thk|^2 \right) \, dz
\end{equation}
and, using incompressibility, 
\begin{equation}
  \label{eq:Agradv}
  \Ltf{\del \bv}^2 
  \geq \half \Ltf{\del \bv}^2 + \half A \sum_{\bk} \int_0^1 
  \left( k^2 |\whk|^2 + 2 |\D \whk|^2 + \frac{1}{k^2} |\D^2 \whk|^2
  \right) \, dz ,
\end{equation}
with equality for two-dimensional flows.

We bound $\intf w \theta$ using estimates of the form
\eqref{iq:whkthkz2}, which imply that for $\bk \not= \bO$, any $0 <
\delta \leq 1/2$ and any $p, q > 0$,
\begin{equation}
  \label{iq:Aiwhkthkd}
  \begin{split}
    \left( \int_0^{\delta} |\whk \conj{\thk}| \, dz +
      \int_{1-\delta}^1 |\whk \conj{\thk}| \, dz \right) & \\
    \leq \frac{\delta^2}{4 \sqrt{2}} & \left[
      p \| \D \whk \|^2 + \frac{1}{p} \| \D
    \thk \|^2 + \frac{q}{k^2} \| \D^2 \whk \|^2 +
    \frac{k^2}{q} \| \thk \|^2 \right] .
  \end{split}
\end{equation}
In particular, using \eqref{iq:Aiwhkthkd} with $\delta = 1/2$,
recalling $\wh_{\bO} = 0$,
\begin{align}
  \frac{1}{A} \intf w \theta & = \sum_{\bk} \int_0^1 \Real{\whk
    \conj{\thk}} \, dz = \frac{1}{2} \sum_{\bk} \int_0^1 ( \whk \conj{\thk} +
  \conj{\whk} \thk )  \, dz \nonumber \\
  \label{iq:Aiwth}
  & \leq \frac{1}{16 \sqrt{2}} \sum_{\bk} \left[
      a_1 \| \D \whk \|^2 + \frac{1}{a_1} \| \D
    \thk \|^2 + \frac{a_2}{k^2} \| \D^2 \whk \|^2 +
    \frac{k^2}{a_2} \| \thk \|^2 \right] .
\end{align}

Similarly, using the definition \eqref{eq:Ataudelta} of $\tau(z)$, as in
\eqref{iq:indef} we have
\begin{align}
  \frac{1}{A} \left| \intf \tau' w \theta \right| & \leq \frac{1}{2}
  \sum_{\bk} \left|  \int_0^1 \tau' ( \whk \conj{\thk} + \conj{\whk} \thk )
    \, dz \right| \nonumber \\ 
  \label{iq:Aiwthtau}
  & \leq \frac{\dztau \delta^2}{4 \sqrt{2}} \sum_{\bk} \left[
    a_3 \| \D \whk \|^2 + \frac{1}{a_3} \| \D
    \thk \|^2 + \frac{a_4}{k^2} \| \D^2 \whk \|^2 +
    \frac{k^2}{a_4} \| \thk \|^2 \right] .
\end{align}

The other two terms in \eqref{eq:AL2evol} containing $\tau'$ may be
estimated directly, since for any $a_5 > 0$, we have (also using
$\|\theta_z\| \leq \|\del \theta\|$)
\begin{align}
  - \intf \tau' \theta_z + A \zlim{\tau' \have{\theta}} 
  & = \dztau A \left( \int_0^{\delta} \have{\theta}_z \, dz +
    \int_{1-\delta}^1 \have{\theta}_z \, dz \right) - \dztau A
  \int_0^1 \have{\theta}_z \, dz \nonumber \\
  & = -\dztau A \int_{\delta}^{1-\delta} \have{\theta}_z \, dz =
  -\dztau \int_{\delta}^{1-\delta} \iint_A \theta_z \, dx \, dy \, dz
  \nonumber \\
  & \leq \dztau A^{1/2} \sqrt{1-2\delta} \left(
    \int_{\delta}^{1-\delta} \iint_A \theta_z^2 \, dx   \, dy \, dz
  \right)^{1/2} \nonumber \\
  \label{iq:Ataupthz}
  & \leq \dztau A^{1/2} \| \theta_z \| 
  \leq \half \left( a_5 \dztau^2 A + \frac{1}{a_5} \| \del \theta
    \|^2 \right) . 
\end{align}

Substituting \eqref{eq:Agradth}--\eqref{iq:Ataupthz} into
\eqref{eq:AL2evol}, and collecting like terms, we thus obtain
\begin{align}
  \frac{1}{2} \frac{d}{dt} \left[ \Ltf{\theta}^2 + \frac{\muwt}{\Prandtl
      \R} \Ltf{\bv}^2 \right] 
  & \leq \half a_5 \dztau^2 A - \frac{\muwt}{2\R} \Ltf{\del
    \bv}^2 - \half \left( 1 - \frac{1}{a_5} \right) \Ltf{\del
    \theta}^2 + A \zlim{\have{\theta \theta_z}} \nonumber \\
  & \quad - A \sum_{\bk} \frac{\muwt}{2\R} k^2 \|\whk\|^2
  - A \sum_{\bk} \left( \frac{\muwt}{\R} - \frac{a_1
      \muwt}{16 \sqrt{2}} - \frac{a_3 \dztau \delta^2}{4\sqrt{2}}
  \right) \| \D \whk \|^2 \nonumber \\
  & \quad - A \sum_{\bk} \left( \frac{\muwt}{2\R} - \frac{a_2
      \muwt}{16 \sqrt{2}} - \frac{a_4 \dztau \delta^2}{4\sqrt{2}}
  \right) \frac{1}{k^2} \| \D^2 \whk \|^2 \nonumber \\
  & \quad - A \sum_{\bk} \left( \half - \frac{\muwt}{16 \sqrt{2} \, a_2} -
    \frac{\dztau \delta^2}{4 \sqrt{2} \, a_4} \right) k^2 \| \thk \|^2
  \nonumber \\
  \label{iq:AL2full}
  & \quad - A \sum_{\bk} \left( \half - \frac{\muwt}{16 \sqrt{2} \, a_1} -
    \frac{\dztau \delta^2}{4 \sqrt{2} \, a_3} \right) \| \D \thk \|^2 .
\end{align}
At this point we are free to choose the constants $a_1$--$a_5$, as
well as $\muwt$ and $\delta$.  It is convenient to begin by selecting
$a_1$--$a_4$ so that $\muwt/16\sqrt{2}\, a_{1,2} = \dztau
\delta^2/4\sqrt{2}\, a_{3,4} = 1/4$, and then, after substitution, to
choose $\muwt$ and $\delta$ to satisfy $a_2\muwt/16\sqrt{2} = a_4 \dztau
\delta^2/4\sqrt{2} = \muwt/4\R$.  This gives
\begin{equation}
  \label{eq:a1tomu}
  a_1 = a_2 = \frac{\muwt}{4\sqrt{2}}, \qquad a_3 = a_4 = \frac{\dztau
  \delta^2}{\sqrt{2}}, \qquad \muwt = \frac{32}{\R}, 
\end{equation}
and $\dztau^2 \delta^4/8 = \muwt/4\R = 8/\R^2$, so that $\dztau \delta^2
= 8/\R$.  Choosing, furthermore, $a_5 = 2$, and substituting,
\eqref{iq:AL2full} becomes
\begin{align}
  \frac{1}{2} \frac{d}{dt} \left[ \Ltf{\theta}^2 + \frac{32}{\Prandtl
      \R^2} \Ltf{\bv}^2 \right] 
  & \leq \dztau^2 A - \frac{16}{\R^2} \Ltf{\del
    \bv}^2 - \frac{1}{4} \Ltf{\del
    \theta}^2 + A \zlim{\have{\theta \theta_z}} \nonumber \\
  & \quad - A \sum_{\bk} \frac{16}{\R^2} \left( k^2 \|\whk\|^2
    + \| \D \whk \|^2 \right) \nonumber \\
  \label{iq:AL2full2}
  & \leq \dztau^2 A - \frac{16}{\R^2} \Ltf{\del
    \bv}^2 - \frac{1}{4} \Ltf{\del
    \theta}^2 + A \zlim{\have{\theta \theta_z}} .
\end{align}

\subsection{Poincar\'e and related inequalities}
\label{sapp:poincare}

It remains to establish Poincar\'e-like inequalities controlling
$\Ltf{\bv}^2$ and $\Ltf{\theta}^2$ by $\Ltf{\del \bv}^2$ and
$\Ltf{\del \theta}^2$, respectively.  In the following we consider
only functions periodic in the horizontal directions, on $(x,y) \in
[0,L_x]\times[0,L_y]$, so that further discussion of ``boundary
conditions'' refers to the vertical boundaries at $z = 0$ and $z = 1$.

For nonzero functions $\psi$ which vanish at the vertical boundaries,
we have the Poincar\'e inequality $\| \del \psi \|^2/\|\psi\|^2 \geq
\lameig_D$, where $\lameig_D = \pi^2$ is the lowest 
eigenvalue\footnote{The use of $\lameig$ (or $\lameig_{D,N,R}$) to
  represent eigenvalues only in this Appendix~\ref{app:bounded}
  should not be confused with the conductivity $\lambda$ elsewhere.}
of $-\lap$ on this domain $[0,L_x]\times[0,L_y]\times[0,1]$ with
homogeneous Dirichlet BCs at $z = 0,1$.  Applying this inequality to
the three components of the velocity field with no-slip BCs, we have
\begin{equation}
  \label{iq:APoinv}
  - \Ltf{\del \bv}^2 \leq - \lameig_D \Ltf{\bv}^2 = - \pi^2
  \Ltf{\bv}^2 .
\end{equation}
In order to establish an analogous inequality for $-\Ltf{\del
  \theta}^2$, and to find $\dztau$ and $\delta$ from $\dztau \delta^2
= 8/\R$, we need to consider the different thermal boundary conditions
separately:

\subsubsection{Inequalities for fixed temperature BCs:}
\label{ssapp:poindir}

For functions $\theta$ satisfying Dirichlet BCs $\theta|_{z=0,1} = 0$
we have, as discussed above,
\begin{equation}
  \label{iq:APointhdir}
  - \Ltf{\del \theta}^2 \leq - \lameig_D \Ltf{\theta}^2
\end{equation}
for $\lameig_D = \pi^2$.  Since in this case the term $A
\zlim{\have{\theta} \tau'}$ in \eqref{eq:AL2evol} vanishes, we can
improve upon \eqref{iq:Ataupthz} to give, for $0 < \delta \leq 1/2$, 
\begin{align}
  - \intf \tau' \theta_z & = \dztau \left( \int_0^{\delta} \iint_A
    \theta_z \, dx \, dy \, dz + \int_{1-\delta}^1  \iint_A
    \theta_z \, dx \, dy \, dz \right) \nonumber \\
  & \leq \dztau A^{1/2} \delta^{1/2} \left[ \left( \int_0^{\delta}
      \iint_A \theta_z^2 \, dx \, dy \, dz \right)^{1/2} + \left(
      \int_{1-\delta}^1 \iint_A \theta_z^2 \, dx \, dy \, dz
    \right)^{1/2} \right] \nonumber \\
  \label{iq:Ataupthzdir}
  & \leq \half \left( 2 a_5 \dztau^2 \delta A + \frac{1}{a_5}
    \Ltf{\del \theta}^2 \right) .
\end{align}
Using \eqref{iq:Ataupthzdir} instead of \eqref{iq:Ataupthz}, we obtain
an equation similar to \eqref{iq:AL2full} in which the first term on
the right-hand side is $a_5 \dztau^2 \delta A$, and choose $a_1$--$a_5$,
$\muwt$ and $\delta$ as before.  In the fixed temperature case we have
$\dztau = 1/2\delta$, so that the condition on $\delta$ becomes
$\dztau \delta^2 = \delta/2 = 8/\R$, or $\delta = 16/\R$, $\dztau =
\R/32$.  Hence the equivalent of \eqref{iq:AL2full2} becomes
\begin{align}
  \frac{1}{2} \frac{d}{dt} \left[ \Ltf{\theta}^2 + \frac{32}{\Prandtl
      \R^2} \Ltf{\bv}^2 \right] 
  & \leq 2 \dztau^2 \delta A - \frac{16}{\R^2} \Ltf{\del
    \bv}^2 - \frac{1}{4} \Ltf{\del
    \theta}^2 + A \zlim{\have{\theta \theta_z}} \nonumber \\
  & = \frac{\R}{32} A - \frac{1}{4} \Ltf{\del \theta}^2 -
  \frac{16}{\R^2} \Ltf{\del \bv}^2  \nonumber \\
  & \leq \frac{\R}{32} A - \frac{\lameig_D}{4} \left[ \Ltf{\theta}^2 +
    \frac{64}{\R^2} \Ltf{\bv}^2 \right] \nonumber \\
  \label{iq:AL2full4dir}
  & \leq \frac{\R}{32} A - \frac{\lameig_D}{4} \min(2\Prandtl,1)
  \left[ \Ltf{\theta}^2 + \frac{32}{\Prandtl \R^2} \Ltf{\bv}^2 \right] ,
\end{align}
where we used $\zlim{\have{\theta \theta_z}} = 0$ for Dirichlet
thermal BCs, and the Poincar\'e inequalities
\eqref{iq:APoinv}--\eqref{iq:APointhdir}.

\subsubsection{Inequalities for fixed flux BCs:}
\label{ssapp:poinneu}

When $\theta$ satisfies Neumann BCs $\theta_z|_{z=0,1} = 0$, we again
have $\zlim{\have{\theta \theta_z}} = 0$.  In addition, $\dztau = 1$,
and $\delta$ is found from $\dztau \delta^2 = \delta^2 = 8/\R$, so
$\delta = \sqrt{8/\R}$; hence \eqref{iq:AL2full2} becomes 
\begin{equation}
  \label{iq:AL2full3neu}
  \frac{1}{2} \frac{d}{dt} \left[ \Ltf{\theta}^2 + \frac{32}{\Prandtl
      \R^2} \Ltf{\bv}^2 \right] 
  \leq A - \frac{16}{\R^2} \Ltf{\del \bv}^2 - \frac{1}{4}
  \Ltf{\del \theta}^2  .
\end{equation}
In this case, while the Poincar\'e inequality \eqref{iq:APoinv} holds
for the velocity field as before, the temperature field requires a bit
more care, since under Neumann thermal BCs the equations are
invariant under $\theta \mapsto \theta + $constant, and we have no
immediate control of $\Ltf{\theta}^2$ by $\Ltf{\del \theta}^2$.
However, in general for nonzero functions $\psi$ with \emph{mean
  zero}, we have $\|\del \psi\|^2/\|\psi\|^2 \geq \lameig_N$, where
$\lameig_N = \pi^2$ is the lowest \emph{nonzero} eigenvalue of $-\lap$
on this domain with homogeneous Neumann BCs at $z = 0, 1$; and we can
satisfy the additional condition on the mean (in the fixed flux case
only) by exploiting the remaining freedom in the definition
\eqref{eq:Ataudelta} of the background $\tau(z)$:

Since for fixed flux thermal BCs the flux out at the top of the fluid
exactly balances the flux in at the bottom, the total heat content is
preserved over time; more precisely, letting $\aveT = A^{-1} \intf T$
be the mean temperature over the fluid, in this case \eqref{eq:tfave}
becomes $d \aveT/dt = 0$.  Thus we may choose the
(previously arbitrary) average $\avetau$ of $\tau(z)$ to be the
(constant) average temperature, $A^{-1} \intf \tau = \avetau = \aveT$,
thereby completing the definition \eqref{eq:Ataudelta} for $\Bi =
\infty$.  Hence by construction the perturbation $\theta$ has mean zero,
$\intf \theta = 0$, so that we have the inequality
\begin{equation}
  \label{iq:APointhneu}
  - \Ltf{\del \theta}^2 \leq - \lameig_N \Ltf{\theta}^2 .
\end{equation}

Substituting \eqref{iq:APoinv} and \eqref{iq:APointhneu} into
\eqref{iq:AL2full3neu} now gives 
\begin{align}
  \frac{1}{2} \frac{d}{dt} \left[ \Ltf{\theta}^2 + \frac{32}{\Prandtl
      \R^2} \Ltf{\bv}^2 \right] 
  & \leq A  - \frac{1}{4} \left[ \lameig_N \Ltf{\theta}^2 + \frac{64}{\R^2}
  \lameig_D \Ltf{\bv}^2 \right] \nonumber \\
  \label{iq:AL2full4neu}
  & \leq A  - \frac{\lameig_N}{4} \min \left( 2\Prandtl
    \frac{\lameig_D}{\lameig_N}, 1 \right) \left[ \Ltf{\theta}^2 +
    \frac{32}{\Prandtl \R^2} \Ltf{\bv}^2 \right] .
\end{align}

\subsubsection{Inequalities for mixed thermal BCs:}
\label{ssapp:poinrob}

For thermal BCs with general Biot number $0 < \Bi < \infty$, we do not
have such simple expressions for $\delta$ and $\dztau$, but as before
we define $\delta = \delta(\R,\Bi)$ by $\dztau(\delta) \delta^2 = (1 +
2\Bi) \delta^2/2(\delta + \Bi) = 8/\R$.  Weakening the inequality
\eqref{iq:AL2full2} by using an upper bound for $\dztau$ (valid for
$\Bi > 0$) uniform in $\delta > 0$, we have in this case 
\begin{equation}
  \label{iq:AL2full3rob}
  \frac{1}{2} \frac{d}{dt} \left[ \Ltf{\theta}^2 + \frac{32}{\Prandtl
      \R^2} \Ltf{\bv}^2 \right] 
  \leq \frac{(1+2\Bi)^2}{4\Bi^2} A - \frac{16}{\R^2} \Ltf{\del
    \bv}^2 - \frac{1}{4} \Ltf{\del \theta}^2  + A \zlim{\have{\theta
      \theta_z}} .
\end{equation}
Observe that for $\Bi \in (0,\infty)$, the boundary term in
\eqref{iq:AL2full3rob} does not vanish in general; in lieu of a
Poincar\'e inequality, we exploit this fact in the following to
control $\| \theta\|^2$:

For $0 < \Bi < \infty$, it is straightforward to verify that for
nonzero functions $\psi$ on a general domain $\Omega$, the stationary
values of the ratio
\begin{equation*}
  \lameig = \frac{\intO |\del \psi|^2 + \intdO \Bi^{-1}
    \psi^2}{\intO \psi^2} 
\end{equation*}
are the eigenvalues of $-\lap$ on $\Omega$ with homogeneous Robin BCs
on $\partial \Omega$, and are attained at the corresponding
eigenfunctions; that is, at the solutions of
\begin{equation*}
  - \lap \psi = \lameig \psi \ \ \text{in}\ \ \Omega, \qquad \unitn
  \cdot \del \psi + \Bi^{-1} \psi = 0 \ \ \text{on}\ \ \partial
  \Omega ,
\end{equation*}
so that the lowest eigenvalue $\lameig_{R,\Omega} > 0$ minimizes the
above ratio.
Specializing to our fluid domain $[0,L_x]\times[0,L_y]\times[0,1]$ and
evaluating the boundary term, we
have similarly that for nonzero horizontally periodic functions
$\psi$, we have 
\begin{equation*}
  \frac{\intf |\del \psi |^2 + \Bi^{-1} \intdf{\psi^2}}{\intf
    \psi^2} = \frac{\Ltf{\del \psi}^2 + \Bi^{-1} A \left(
      \have{\psi^2}|_{z=0} + \have{\psi^2}|_{z=1}
    \right)}{\Ltf{\psi}^2} \geq \lameig_R , 
\end{equation*}
where $\lameig_R > 0$ is the lowest eigenvalue of $-\lap$ on this
domain with homogeneous Robin BCs $\unitn \cdot \del \psi + \Bi^{-1}
\psi = 0$ at $z = 0, 1$, or equivalently $\psi - \Bi \psi_z = 0$ at $z
= 0$, $\psi + \Bi \psi_z = 0$ at $z = 1$ (see \eqref{eq:throbbc}).  In
particular, the temperature perturbation field $\theta$ for our
convection problem with mixed thermal BCs satisfies $\lameig_R \|
\theta\|^2 \leq \| \del \theta\|^2 + \Bi^{-1} A \left(
  \have{\theta^2}|_{z=0} + \have{\theta^2}|_{z=1} \right)$.  However,
since $\theta$ itself satisfies the Robin BCs \eqref{eq:throbbc}, we
have by \eqref{eq:thbdyrobn} that $0 \leq \Bi^{-1} A \left(
  \have{\theta^2}|_{z=0} + \have{\theta^2}|_{z=1} \right) = - A
\zlim{\have{\theta \theta_z}}$; consequently for mixed thermal BCs we
have the inequality
\begin{equation}
  \label{iq:APointhrob}
  - \Ltf{\del \theta}^2 + A \zlim{\have{\theta \theta_z}} \leq -
  \lameig_R \Ltf{\theta}^2 .
\end{equation}

Substituting \eqref{iq:APoinv} and \eqref{iq:APointhrob} into
\eqref{iq:AL2full3rob}, we thus estimate
\begin{align}
  \frac{1}{2} \frac{d}{dt} \left[ \Ltf{\theta}^2 + \frac{32}{\Prandtl
      \R^2} \Ltf{\bv}^2 \right] 
  & \leq \frac{(1+2\Bi)^2}{4\Bi^2} A  - \frac{1}{4} \left[ \lameig_R
    \Ltf{\theta}^2 + \frac{64}{\R^2} \lameig_D \Ltf{\bv}^2 \right]   +
  \frac{3}{4} A  \zlim{\have{\theta \theta_z}}
  \nonumber \\ 
  \label{iq:AL2full4rob}
  & \leq \frac{(1+2\Bi)^2}{4\Bi^2} A  - \frac{\lameig_R}{4} \min
  \left( 2\Prandtl \frac{\lameig_D}{\lameig_R}, 1 \right) \left[
    \Ltf{\theta}^2 + \frac{32}{\Prandtl \R^2} \Ltf{\bv}^2 \right] .
\end{align}

\subsubsection{Uniform boundedness for general thermal boundary
  conditions applied to fluid:}
\label{ssapp:unifbddfluid}

The estimates \eqref{iq:AL2full4dir}, \eqref{iq:AL2full4neu} and
\eqref{iq:AL2full4rob} are all differential inequalities of the form
$dE/dt \leq b_1 - b_2 E$ for constants $b_1 , b_2 > 0$, where $E(t) =
\Ltf{\theta}^2 + (32/\Prandtl \R^2) \Ltf{\bv}^2$.  It follows, using
Gronwall's inequality, that $E(t) \leq E(0) \e^{-b_2 t} + (b_1/b_2) (1
- \e^{-b_2 t})$, thus completing the proof that $\bv$ and $\theta$,
and hence $\bu = \bv$ and $T = \tau + \theta$, are uniformly bounded
(by $E(0) + b_1/b_2$) in $\Ltwo$ for all $t$.
\end{proof}

\section{Governing identities valid in fixed flux limit}
\label{app:ffidentities}

In Section~\ref{ssec:mixedid} the energy identities required to
formulate a bounding principle were stated in a form valid for $0 \leq
\Bi < \infty$, which are appropriate for exploring the fixed
temperature limit $\Bi \to 0$, and let us study the effect of finite
conductivity as a perturbation of the ideal case of perfectly
conducting boundaries.  By instead writing these governing identities
in terms of $\dT$, $\dtau$, $\tave{\have{T^2}|_{z=0,1}}$ and
$\tave{\have{\theta^2}|_{z=0,1}}$, we obtain a formulation valid for
$0 < \Bi \leq \infty$ and relevant to the insulating boundary fixed
flux limit $\Bi \to \infty$.  For completeness, we state these
identities here; of course for $0 < \Bi < \infty$ the formulas
\eqref{eq:gradurobn}--\eqref{eq:bdidrobn} in terms of $\dT$ are
equivalent to \eqref{eq:gradurobd}--\eqref{eq:bdidrobd} in terms of
$\dzT$. 

Solving for $\dzT$ from \eqref{eq:betadelta} and substituting into the
global kinetic energy identity \eqref{eq:gradu2}, we first obtain
\begin{equation}
  \frac{1}{A R} \tave{\Ltf{\del \bu}^2 } = \dzT - \dT = \frac{1}{2\Bi}
  (1 + 2\Bi) \left( 1 - \dT \right) ,
    \label{eq:gradurobn}
\end{equation}
which reduces to \eqref{eq:graduneu} in the limit $\Bi \rtarr \infty$.
Correspondingly, evaluating $\tave{\zlim{\have{T T_z}}}$ by solving
for $T_z$ at $z = 0,1$ in terms of $T$ and averaging, the general form
of the global thermal energy identity \eqref{eq:gradT} which reduces
to the fixed flux limit \eqref{eq:gradTneu} is
\begin{equation}
  \frac{1}{A} \tave{\Ltf{\del T}^2 } = \tave{\zlim{\have{T T_z}}} =
  \dT + \frac{1}{2\Bi} \left( 1 + \dT \right) - \frac{1}{\Bi} \tave{
    \have{T^2}|_{z=0} + \have{T^2}|_{z=1}} . \label{eq:gradTrobn}
\end{equation}

The homogeneous Robin thermal BCs satisfied by the fluctuation
$\theta$ in the background flow formulation (with a background field
$\tau(z)$ obeying the identity \eqref{eq:gammadtau}) may be written
for $\Bi > 0$ as $\unitn \cdot \del \theta + \Bi^{-1} \theta = 0$, or
in our horizontally periodic geometry 
\begin{equation}
  \label{eq:throbbcn}
  \theta_z - \Bi^{-1} \theta = 0 \ \ \text{at}\ \ z = 0, \qquad 
  \theta_z + \Bi^{-1} \theta = 0 \ \ \text{at}\ \ z = 1 ;
\end{equation}
the individual horizontal thermal Fourier modes thus satisfy
\begin{equation}
  \label{eq:thkrobbcn}
  \D \thk(0) = \Bi^{-1} \, \thk(0) , \qquad \D \thk(1) = - \Bi^{-1} \,
  \thk(1) .
\end{equation}
The appropriate formulation of the boundary term in
\eqref{eq:Qtiltbdef} valid for $0 < \Bi \leq \infty$ is thus
$\intdf{\theta \, \unitn \cdot \del \theta} = - \Bi^{-1}
\intdf{\theta^2} \leq 0$, which in our geometry becomes
\begin{align}
  \label{eq:thbdyrobn}
  \zlim{\have{\theta \theta_z}} & = - \frac{1}{\Bi}
  (\have{\theta^2}|_{z=0} + \have{\theta^2}|_{z=1}) \\
  \label{eq:thkbdyrobn}
  & = -\frac{1}{\Bi} \sum_{\bk} \left( |\thk(0)|^2 + |\thk(1)|^2
  \right) \leq 0 
\end{align}
in real and Fourier space, respectively, again verifying the
stabilizing effect of the boundary term in $\Qtiltb$.  Substituting
for $\dzT$ and $\dztau$, we may write the identity \eqref{eq:bgidd}
instead in terms of $\dT$ and $\dtau$, as
\begin{equation}
  \dzT \dtau - \dztau \dT + \dztau \dtau = \frac{1}{2\Bi} (1 + 2\Bi)
  (2 \dtau - \dT ) - \frac{1}{2\Bi} \dtau^2 . \label{eq:bgidn}
\end{equation}
For mixed (Robin) thermal BCs with constant Biot number $\Bi > 0$, we
can now substitute \eqref{eq:gradurobn} and \eqref{eq:bgidn} to write
the governing identity \eqref{eq:bdid2} in terms of $\dT$ as 
\begin{equation}
  \label{eq:bdidrobn}
  \frac{1}{2\Bi} (1 + 2\Bi) (1 - \dT) = \ b \left( \int_0^1 \tau'^2 \,
    dz + \frac{1 + 2\Bi}{2\Bi} ( 1 - 2\dtau) + \frac{1}{2\Bi}
    \dtau^2 \right)  - \frac{b}{A} \Qtiltb[\bv,\theta] 
\end{equation}
(compare \eqref{eq:bdidneu}), using \eqref{eq:thbdyrobn} to evaluate
the boundary term in $\Qtiltb[\bv,\theta]$, which allows us to derive
the lower bound \eqref{iq:deltabnd} on $\dT$; of course for $0 < \Bi <
\infty$, this expression is completely equivalent to
\eqref{eq:bdidrobd}.

\section{Rigorous admissibility conditions and bounds}
\label{app:rigbounds}

In Section~\ref{sec:Bibound}, explicit analytical bounds on the
dependence of the Nusselt number $\Nu$ on the control parameter $\R$
and hence on the Rayleigh number $\Ra$ were obtained, using a piecewise
linear background $\taud(z)$ and elementary estimates on the quadratic
form.  However, for fixed flux and general Biot number thermal BCs
($\Bi > 0$), the conditions \eqref{eq:deltacneu},
\eqref{eq:deltacrob2} on $\delta$ and the bounds \eqref{eq:Nubndneu},
\eqref{eq:Nubndrob} were derived using some asymptotic approximations
for $\delta \ll \min (\Bi,1)$ (that is, for sufficiently large $\R$),
and are thus not \emph{rigorously} applicable for fixed nonzero
$\delta$.  The arguments of Section~\ref{sec:Bibound} may however
readily be adapted to give rigorously valid admissibility conditions
on $\taud$, and hence rigorous bounds on $\Nu(\Ra)$, at the cost of
weakening the prefactors.  

We choose to present the rigorous bounds as results valid uniformly in
$\Bi$ and for $0 \leq \delta \leq 1/2$; the prefactors may be improved
by $\bigO(1)$ corrections in several places by restricting the range
of $\Bi$ and/or $\delta$ values.  In our formal development we account
for all relevant scaling regimes by using a balance parameter $b =
\bproof(\Bi,\delta)$ defined on $\delta \in (0,1/2]$, $\Bi \in
[0,\infty]$ via
\begin{equation}
  \label{eq:bdeltaBi}
  \bproof - 1 =
  \begin{cases}
    \frac{1}{2} & \mathrm{for}\ 0 \leq \Bi \leq \frac{1}{2}, \ \ \delta
    \geq \Bi , \\
    \frac{\delta}{2\Bi} & \mathrm{for}\ 0 < \Bi \leq \frac{1}{2}, \ \
    \delta \leq \Bi , \\
    \frac{1+2\Bi}{4\Bi} \, \delta & \mathrm{for}\ \frac{1}{2} \leq \Bi
    \leq \infty \ \ (\text{so\ necessarily}\ \delta \leq \Bi) .
  \end{cases}
\end{equation}
Note that the function $\bproof(\Bi,\delta)$ is continuous on its
domain and agrees with the values used previously in the limiting
fixed temperature and fixed flux cases, $b_0 = 3/2$ and $b_{\infty} =
1 + \cinf \delta = 1 + \delta/2$, respectively.  We immediately
observe that
\begin{equation}
  \label{iq:brange}
  1 < \bproof \leq \frac{3}{2} ;
\end{equation}
for $\Bi \geq 1/2$ this follows from $(1+2\Bi)/4\Bi = 1/2 + 1/4\Bi
\leq 1$ and $\delta \leq 1/2$.

In the first lemma we estimate the formulas for upper bounds on $\Nu$
and lower bounds on $\dT$ for piecewise linear backgrounds uniformly
in terms of $\delta$ and $\Bi$:
\begin{lemma}
  \label{lem:bndNudelta}
  For general mixed thermal BCs with arbitrary Biot number $\Bi \in
  [0,\infty]$, the upper bound $\Nupwl{\Bi}$ on the Nusselt number and
  lower bound $\BpwldT{\Bi}$ on the temperature drop across the fluid, obtained
  using a piecewise linear background profile of the form $\taud$
  \eqref{eq:taudelta} with $\delta \in (0,1/2]$ and a balance
  parameter $b = \bproof(\Bi,\delta)$ defined by \eqref{eq:bdeltaBi},
  satisfy
  \begin{equation}
    \label{eq:NudTdbnd}
    \Nupwl{\Bi}(\delta,\bproof) \leq \frac{3}{2\delta} , \qquad
    \BpwldT{\Bi}(\delta,\bproof) \geq \frac{\bproof(\Bi,\delta) -
      1}{2} .
  \end{equation}
\end{lemma}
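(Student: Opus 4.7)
The proof plan is a direct case analysis according to the three regimes in the piecewise definition \eqref{eq:bdeltaBi} of $\bproof(\Bi,\delta)$. The key preliminary is to rewrite \eqref{eq:gammaBi} as
\begin{equation*}
\dztau - 1 = \frac{1-2\delta}{2(\delta+\Bi)}, \qquad 1 - \dtau = 2\Bi(\dztau-1) = \frac{\Bi(1-2\delta)}{\delta+\Bi},
\end{equation*}
both non-negative for $\delta \leq 1/2$; substituting into \eqref{eq:dzTrobpwlin}--\eqref{eq:dTrobpwlin} then expresses $\BpwldzT{\Bi}(\delta,b) = 1 + b(\dztau-1) \geq 1$ and $\BpwldT{\Bi}(\delta,b) = 1 - b(1-\dtau)$ as explicit rational functions of $\delta$, $\Bi$, and $b$, so that both claims of the lemma reduce, in each regime, to low-degree polynomial inequalities in $\delta$ and $\Bi$. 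Note that $2\delta \, \BpwldzT{\Bi}(\delta,\bproof) \leq 3 \, \BpwldT{\Bi}(\delta,\bproof)$ is equivalent to $\Nupwl{\Bi}(\delta,\bproof) \leq 3/(2\delta)$, so the Nusselt bound reduces to a single cross-multiplied inequality.

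In the first regime, $0 \leq \Bi \leq 1/2$ with $\delta \geq \Bi$ and $b = 3/2$, the inequality $\delta+\Bi \leq 2\delta$ gives $\dtau \geq (1+2\Bi)/2 \geq 1/2$, so $\BpwldT{\Bi} \geq 1 + (3/2)(-1/2) = 1/4 = (\bproof-1)/2$ directly; and $2\delta \, \BpwldzT{\Bi} \leq 3\, \BpwldT{\Bi}$ reduces after cross-multiplication to $3(\delta-\Bi) + 14\delta\Bi + 2\delta^2 \geq 0$, which holds by hypothesis. In the second regime, $0 < \Bi \leq 1/2$ with $\delta \leq \Bi$ and $b = 1 + \delta/(2\Bi)$, straightforward simplification yields $\BpwldT{\Bi} = \delta(1+4\Bi+2\delta)/[2(\delta+\Bi)]$, whence $\BpwldT{\Bi} \geq \delta/(4\Bi)$ collapses to $\Bi(1+8\Bi+4\delta) \geq \delta$, and the Nusselt inequality reduces to $8\Bi^2 + (\Bi-\delta) + 6\delta\Bi + 2\delta^2 \geq 0$, both manifest when $\Bi \geq \delta$. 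In the third regime, $1/2 \leq \Bi \leq \infty$ with $b = 1 + \cBi\delta$, the identity $\BpwldT{\Bi} = \delta(1+2\Bi)(3+2\delta)/[4(\delta+\Bi)]$ already appears in \eqref{eq:dTdpwlrob}, and together with $(\bproof-1)/2 = (1+2\Bi)\delta/(8\Bi)$ the first inequality reduces to $5\Bi + \delta(4\Bi-1) \geq 0$, while the Nusselt bound reduces to $5\Bi - \delta + 6\delta\Bi + 2\delta^2 \geq 0$; both hold for $\Bi \geq 1/2$ and $0 < \delta \leq 1/2$.

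The main obstacle is not any single estimate but bookkeeping: verifying that the three cases tile the parameter region $(\delta,\Bi) \in (0,1/2] \times [0,\infty]$ without gaps, that the algebraic simplifications remain valid across the interface curves $\delta = \Bi$ (for $0 < \Bi \leq 1/2$) and $\Bi = 1/2$ (for $\delta \leq 1/2$) where neighbouring cases meet, and that the $3/2$ prefactor in the Nusselt bound is slack enough to absorb the $\bigO(1)$ sub-optimality introduced by using a single continuous $\bproof(\Bi,\delta)$ rather than the regime-optimal balance parameter. Continuity of $\bproof$ across these seams is already recorded after \eqref{eq:bdeltaBi}; once the case split is in place, everything else is routine polynomial arithmetic of degree at most two in each of $\delta$ and $\Bi$.
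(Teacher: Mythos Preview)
Your proof is correct and follows essentially the same three-case analysis as the paper, according to the regimes in the definition \eqref{eq:bdeltaBi} of $\bproof$. The only tactical difference is that for the Nusselt bound you cross-multiply to a single polynomial inequality $2\delta\,\BpwldzT{\Bi} \leq 3\,\BpwldT{\Bi}$, whereas the paper bounds the numerator and denominator of the ratio $\Nupwl{\Bi} = (2\delta)^{-1}\cdot[1+2\Bi+(\bproof-1)(1-2\delta)]/[1+2\Bi+(\bproof-1)(2\delta-1)\Bi/\delta]$ separately in each case; both routes amount to the same elementary algebra.
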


\begin{proof}
  For piecewise linear backgrounds $\taud$, with $\dztau$ and $\dtau$
  given by \eqref{eq:gammaBi}, the upper bounds on $\dzT$ and $\dT$
  from \eqref{eq:dzTrobpwlin}--\eqref{eq:dTrobpwlin} using $b =
  \bproof$ satisfy
  \begin{align*}
    \BpwldzT{\Bi}(\delta,\bproof) & = \dztau + (\bproof -1 )(\dztau - 1)
    = \frac{1}{2(\delta + \Bi)} \left[ 1 + 2\Bi +
      (\bproof-1)(1-2\delta) \right] , \\ 
    \BpwldT{\Bi}(\delta,\bproof) & = \dtau + (\bproof - 1)(\dtau - 1)
    = \frac{\delta}{\delta+\Bi} \left[ 1 + 2\Bi + (\bproof -
      1)\frac{\Bi}{\delta} (2\delta - 1) \right], 
  \end{align*}
  so that 
  \begin{equation*}
    \Nupwl{\Bi}(\delta,\bproof) =
    \frac{\BpwldzT{\Bi}(\delta,\bproof)}{\BpwldT{\Bi}(\delta,\bproof)}
    =  \frac{1}{2\delta} \, \frac{1+2\Bi +
      (\bproof-1)(1-2\delta)}{1+2\Bi + (\bproof-1)(2\delta-1)\Bi/\delta} .
  \end{equation*}
  Now we consider the three cases in the definition of $\bproof$:
  \begin{itemize}
  \item[I:] For $0 \leq \Bi \leq \delta \leq 1/2$:\\
    We have $\bproof - 1 = 1/2$, so $1 + 2\Bi + (\bproof-1)(1-2\delta)
    = 3/2 + 2\Bi - \delta \leq 3/2 + 2\Bi$ and $1+2\Bi +
    (\bproof-1)(2\delta-1)\Bi/\delta = 1 + 3\Bi - \Bi/2\delta \geq 1/2
    + 3\Bi$, so that
    \begin{equation*}
      \Nupwl{\Bi}(\delta,\bproof) \leq \frac{1}{2\delta} \,
      \frac{3+4\Bi}{1+6\Bi} \leq \frac{3}{2\delta} , \qquad
      \BpwldT{\Bi}(\delta,\bproof) \geq \frac{\delta}{\delta+\Bi}
      \left[ \frac{1}{2} + 3\Bi \right] \geq \frac{1}{2} \,
      \frac{\delta}{\delta+\Bi} \geq \frac{1}{4} .
    \end{equation*}

  \item[II:] For $0 < \delta \leq \Bi \leq 1/2$:\\
    We have $\bproof - 1 = \delta/2\Bi$, so $1 + 2\Bi +
    (\bproof-1)(1-2\delta) \leq 1 + 2\Bi + \delta/2\Bi \leq 3/2 +
    2\Bi$ and $1+2\Bi + (\bproof-1)(2\delta-1)\Bi/\delta = 1/2 + 2\Bi +
    \delta \geq 1/2 + 2\Bi$, so that
    \begin{equation*}
      \Nupwl{\Bi}(\delta,\bproof) \leq \frac{1}{2\delta} \,
      \frac{3+4\Bi}{1+4\Bi} \leq \frac{3}{2\delta} , \qquad 
      \BpwldT{\Bi}(\delta,\bproof) \geq \frac{\delta}{\delta+\Bi}
      \left[ \frac{1}{2} + 2\Bi \right] \geq \frac{1}{2} \,
      \frac{\delta}{\delta+\Bi} \geq \frac{\delta}{4\Bi} .
    \end{equation*}

  \item[III:] For $0 < \delta \leq 1/2$, $\Bi \geq 1/2$:\\
    We have $\bproof - 1 = (1+2\Bi)\delta/4\Bi = \cBi \delta$ (see
    \eqref{eq:cBi}), so $1 + 2\Bi + (\bproof-1)(1-2\delta) = (1+2\Bi)
    [1 + \delta/4\Bi (1-2\delta)] \leq (1 + 2\Bi)[1 + \delta/4\Bi]
    \leq (1+2\Bi) 5/4$ and $1+2\Bi + (\bproof-1)(2\delta-1)\Bi/\delta
    = (1 + 2\Bi) [1 + (2\delta-1)/4 ] = (1+2\Bi) [3/4 + \delta/2] \geq
    (1+2\Bi) 3/4$, so that
    \begin{equation}
      \label{iq:NudTpwllargeBi}
      \Nupwl{\Bi}(\delta,\bproof) \leq \frac{1}{2\delta} \,
      \frac{4+\delta/\Bi}{3+2\delta} \leq \frac{5}{6\delta} , \qquad
      \BpwldT{\Bi}(\delta,\bproof) \geq \frac{\delta}{\delta+\Bi}
      \frac{3}{4} (1+2\Bi) \geq \frac{3}{8} \frac{1+2\Bi}{\Bi} \delta
      . 
    \end{equation}
  \end{itemize}
\end{proof}

\begin{remark}
  The form of the bounds in \eqref{eq:NudTdbnd} is chosen so that
  $\Nupwl{\Bi}$ is uniform in $\Bi$, and $\BpwldT{\Bi}$ is continuous
  on $\Bi \in [0,\infty]$, $\delta \in (0,1/2]$; however, the
  prefactors $C_i$ in the bounds of the form $\Nupwl{\Bi} \leq
  C_1/\delta$, $\BpwldT{\Bi} \geq C_2(\bproof - 1)$ may certainly be
  improved for particular values or over restricted ranges of $\Bi$ by
  $\bigO(1)$ factors from their values $C_1 = 3/2$, $C_2 = 1/2$ in
  \eqref{eq:NudTdbnd}.

  For instance, for fixed temperature BCs $\Bi = 0$, from
  \eqref{eq:Nudpwldir} we have $\Nupwl{0} \leq 3/4\delta$ and
  $\BpwldT{0} = 1$; while for fixed flux BCs $\Bi = \infty$ we can
  show from \eqref{eq:Nudpwlneu} that $\Nupwl{\infty} \leq 2/3\delta$
  and $\BpwldT{\infty} \geq 3\delta/2$.  Furthermore, case III in the
  proof of the above Lemma shows clearly in \eqref{iq:NudTpwllargeBi}
  that for $\Bi \geq 1/2$, it is sufficient to take $C_1 = 5/6$, $C_2
  = 3/2$.
\end{remark}

\vspace{1ex}

Lemma~\ref{lem:bndNudelta} gives an upper bound $\Nupwl{\Bi}$ on $\Nu$
and a lower bound $\BpwldT{\Bi}$ on $\dT$ in terms of $\delta$,
\emph{provided} $\delta$ is such that the piecewise linear background
$\taud$ is admissible for $\Reff = b\R/(b-1)$ where $b = \bproof$ is
defined in \eqref{eq:bdeltaBi}.  The next result gives sufficient
conditions on $\delta$, of the form $\delta \leq \dsuff$, for
admissibility of $\taud$ as a function of $\R$ and $\Bi$, using a
balance parameter $\badm$ which coincides, at $\delta = \dsuff$, with
$\bproof$ used in Lemma~\ref{lem:bndNudelta}:
\begin{lemma}
  \label{lem:deltasuff}
  Consider Rayleigh-B\'enard convection subject to thermal boundary conditions
  with Biot number $\Bi \in [0,\infty]$.
  \begin{itemize}
  \item[(a)] For each $\Reff > 0$, the piecewise linear background
    field $\taud(z)$ defined in \eqref{eq:taudelta} is admissible,
    $\Qtiltb [\bv,\theta] \geq 0$ for all allowed fields $\bv$ and
    $\theta$ (see Section~\ref{ssec:admiss}), provided $\delta \leq
    \min (\delta_c,1/2)$, where $\delta_c$ satisfies
    \eqref{eq:deltacrob}:
    \begin{equation*}
      \delta_c^4 = 32 \frac{(\delta_c+\Bi)^2}{(1+2\Bi)^2} \Reff^{-1} .
    \end{equation*}
    In particular, for any $\R > 0$ and $b > 1$ we may choose $\Reff =
    b \R/(b-1)$.

  \item[(b)] For each $\Bi \geq 0$ and $\R > 0$, a sufficient
    condition for the piecewise linear background $\taud$ to be
    admissible for $\Reff = \badm \R/(\badm-1)$ is that $\delta \leq
    \min (\dsuff,1/2)$, where $\badm(\Bi,\R)$ and $\dsuff(\Bi,\R)$ are
    defined as follows:
    \begin{itemize}
    \item[I:] For $0 \leq \Bi \leq 1/2$ and $\R < \Rsuff(\Bi) = (8/3)
      \Bi^{-2}$ (where we define $\Rsuff(0) = \infty$):
      \begin{equation*}
        \badm = \frac{3}{2}, \qquad \dsuff = 2 \left( \frac{2}{3}
      \right)^{1/2} \R^{-1/2} ;
      \end{equation*}
    \item[II:] For $0 < \Bi \leq 1/2$ and $\R \geq \Rsuff(\Bi) = (8/3) \Bi^{-2}$:
      \begin{equation*}
        \badm = 1 + \frac{\dsuff}{2\Bi}, \qquad \dsuff = 2 \left(
          \frac{\Bi}{3} \right)^{1/3} \R^{-1/3} ;
      \end{equation*}
    \item[III:] For $1/2 < \Bi \leq \infty$ and all $\R$:
      \begin{equation*}
        \badm = 1 + \frac{1+2\Bi}{4\Bi} \dsuff, \qquad \dsuff = 2
        \left( \frac{2}{3} \right)^{1/3} \left( \frac{\Bi}{1+2\Bi}
        \right)^{1/3} \R^{-1/3} .
      \end{equation*}
    \end{itemize}
    Furthermore, for $0 < \Bi \leq 1/2$ we have that $\dsuff(\Bi,\R)$
    is continuous in $\R$, and $\dsuff \geq \Bi$ if and only if $\R
    \leq \Rsuff(\Bi)$.
  \end{itemize}
\end{lemma}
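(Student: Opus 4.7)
Part~(a) reduces to the Cauchy--Schwarz--Young machinery already assembled in Section~\ref{ssec:CSestimates}. For mixed BCs with $\Bi\in(0,\infty)$, inequality~\eqref{iq:QktQkrob} shows that the Robin boundary contribution to $\Qkt$ is nonnegative, so it suffices to ensure $\Qk\geq 0$; and estimate~\eqref{iq:Qkdconstr} does precisely this whenever $\dztau^2\delta^4\leq 8/\Reff$. The limits $\Bi=0$ and $\Bi=\infty$ are covered within the same framework, reducing to the classical arguments. Substituting the explicit value $\dztau=(1+2\Bi)/[2(\delta+\Bi)]$ from~\eqref{eq:gammaBi} then converts this sufficient condition into the implicit inequality $\delta^4(1+2\Bi)^2\Reff\leq 32(\delta+\Bi)^2$, with equality defining $\delta_c$; the bound $\delta\leq 1/2$ is only needed so that $\taud$ in~\eqref{eq:taudelta} is well-defined as a piecewise linear profile.

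For part~(b), the plan is to split into the three $\Bi$-regimes dictated by the piecewise definition~\eqref{eq:bdeltaBi} of $\bproof$, choose $\badm$ to agree with $\bproof$ at $\delta=\dsuff$ (so that this lemma feeds cleanly into Lemma~\ref{lem:bndNudelta}), and verify in each case that the stated $\dsuff$ satisfies the implicit admissibility condition from part~(a). In each regime I will first lower-bound $(\badm-1)/\badm$ and hence upper-bound $\Reff$, then apply the monotonic estimates $\delta+\Bi\geq\max(\delta,\Bi)$ together with a range-appropriate upper bound on $(1+2\Bi)^2$ to reduce the implicit inequality to a simple algebraic condition. Concretely, case~I ($\badm=3/2$, $\Reff=3\R$) combined with $\dsuff+\Bi\geq\dsuff$ and $(1+2\Bi)^2\leq 4$ reduces to $\dsuff^2\leq 8/(3\R)$; case~II ($\badm=1+\dsuff/(2\Bi)$, $\dsuff\leq\Bi$) uses $(\badm-1)/\badm\geq\dsuff/(3\Bi)$ to reduce to $\dsuff^3\leq 8\Bi/(3\R)$; and case~III is analogous, with the bound $(\badm-1)/\badm\geq(1+2\Bi)\dsuff/(6\Bi)$, valid thanks to $(1+2\Bi)\dsuff\leq 2\Bi$ for $\Bi\geq 1/2$ and $\dsuff\leq 1/2$, yielding $\dsuff^3\leq 16\Bi/[3(1+2\Bi)\R]$.

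The transitional conditions $\dsuff\geq\Bi$ in case~I and $\dsuff\leq\Bi$ in case~II are both easily seen by direct substitution to correspond to $\R\leq\Rsuff(\Bi)=8/(3\Bi^2)$ (respectively $\R\geq\Rsuff$), and continuity at $\R=\Rsuff$ follows since both expressions give $\dsuff=\Bi$ there; a similar check at $\Bi=1/2$ reconciles cases~II and~III. The main obstacle is essentially a bookkeeping one: the constants in $\badm$ must be chosen delicately enough that $\badm$ matches $\bproof(\Bi,\dsuff)$ at $\delta=\dsuff$, while keeping the products $(1+2\Bi)\dsuff$ and $\dsuff/\Bi$ controlled in the relevant $\Bi$-range. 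These choices are forced by the dual requirements that (i) the admissibility constraint yield the desired cube-root or square-root scaling of $\dsuff$ in $\R$, and (ii) the matching lower bound $\BpwldT{\Bi}(\delta,b)$ supplied by Lemma~\ref{lem:bndNudelta} remain bounded below by a positive multiple of $\bproof-1$ as $\delta\to 0$.
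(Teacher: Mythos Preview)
Your proposal is correct and follows essentially the same approach as the paper. Part~(a) is identical, and for part~(b) you verify the sufficient inequality $\delta^4(1+2\Bi)^2\Reff\leq 32(\delta+\Bi)^2$ directly at $\delta=\dsuff$ using the same elementary lower bounds $(\dsuff+\Bi)\geq\max(\dsuff,\Bi)$ and $1/\badm\geq 2/3$, whereas the paper phrases the same computation as showing $\delta_c\geq\dsuff$ from the defining equation for $\delta_c$; since $\delta^4/(\delta+\Bi)^2$ is increasing in $\delta$, these two formulations are equivalent, and your three case reductions to $\dsuff^2\leq 8/(3\R)$, $\dsuff^3\leq 8\Bi/(3\R)$, and $\dsuff^3\leq 16\Bi/[3(1+2\Bi)\R]$ match the paper's exactly (with equality in each).
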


\begin{proof}
  Part (a) of the lemma was proved in Section~\ref{ssec:CSestimates}
  using Cauchy-Schwarz estimates on the indefinite term in $\Qk$,
  where we also used the relationship from Section~\ref{ssec:admiss}
  between admissibility ($\Qtiltb \geq 0$) and positivity of the
  quadratic forms in Fourier space, $\Qk \geq 0$ for all $\bk$.

  To prove part (b), we first observe for $0 < \Bi \leq 1/2$ that for $\R
  = \Rsuff(\Bi)$, we have $2 (2/3)^{1/2} \Rsuff^{-1/2} = \Bi = 2
  (\Bi/3)^{1/3} \Rsuff^{-1/3}$; hence $\dsuff(\Bi,\R)$ is a
  continuous function of $\R$, and $\dsuff \geq \Bi$ when $\R \leq
  \Rsuff(\Bi)$, while for $\R \geq \Rsuff(\Bi)$ we
  have $\dsuff \leq \Bi$.  The function $\badm$ thus agrees with
  $\bproof(\Bi,\dsuff)$ from \eqref{eq:bdeltaBi}, and as in
  \eqref{iq:brange} we have that $1 < \badm \leq 3/2$.

  Next, we define $\Reffs(\Bi,\R) = \badm \R/(\badm - 1)$ as in the
  statement of the lemma, and then let $\delta_c = \delta_c(\Bi,\R)$
  be the critical value of $\delta$ as in part (a) for this $\Reff =
  \Reffs(\Bi,\R)$.  Then since the result of (a) indicates that
  $\taud$ is admissible for this $\Reff$ whenever $\delta \leq \min
  (\delta_c(\Bi,\R), 1/2)$, to conclude the proof of part (b) it is
  sufficient to show that $\dsuff \leq \delta_c$ in each of the three
  cases:
  \begin{itemize}
  \item[I:] For $0 \leq \Bi \leq 1/2$ and $\R < \Rsuff(\Bi)$:
    \begin{equation*}
      \delta_c^4 = 32 \frac{(\delta_c+\Bi)^2}{(1+2\Bi)^2}
      \frac{1/2}{3/2} \R^{-1} \geq 32 \frac{\delta_c^2}{4} \frac{1}{3} 
      \R^{-1} = \frac{8}{3} \R^{-1} \, \delta_c^2 = \dsuff^2
      \delta_c^2 .
    \end{equation*}
  \item[II:] For $0 < \Bi \leq 1/2$ and $\R \geq \Rsuff(\Bi)$:
    \begin{equation*}
      \delta_c^4 = 32 \frac{(\delta_c+\Bi)^2}{(1+2\Bi)^2}
      \frac{\dsuff/2\Bi}{\badm} \R^{-1} \geq 32 \frac{\Bi^2}{4}
      \frac{\dsuff}{2\Bi} \frac{2}{3} \R^{-1} = \frac{8}{3} \Bi
      \R^{-1} \, \dsuff = \dsuff^4 .
    \end{equation*}
  \item[III:] For $1/2 < \Bi \leq \infty$ and $\R > 0$:
    \begin{equation*}
      \delta_c^4 = 32 \frac{(\delta_c+\Bi)^2}{(1+2\Bi)^2}
      \frac{(1+2\Bi)\, \dsuff}{4\Bi} \frac{1}{\badm} \R^{-1} \geq 8
      \frac{\Bi^2}{(1+2\Bi)^2} \frac{(1+2\Bi)\, \dsuff}{\Bi} \frac{2}{3}
      \R^{-1} = \frac{16}{3} \frac{\Bi}{1+2\Bi} \R^{-1} \, \dsuff =
      \dsuff^4 .
    \end{equation*}
  \end{itemize}
\end{proof}

We may now combine the above results to obtain rigorous and uniformly
valid analytical bounds on $\Nu(\Ra)$ for convection with general Biot
number thermal BCs.

\begin{thm}
  \label{thm:NuRarig}
  For Rayleigh-B\'enard convection subject to thermal boundary
  conditions with Biot number $\Bi \in [0,\infty]$, let the control
  parameter $\R$ satisfy
  \begin{equation*}
    \R \geq \frac{32}{3} \ \ \ \text{for}\ \ \ \Bi \leq \frac{1}{2},
    \qquad \R \geq \frac{128}{3} \frac{\Bi}{1+2\Bi} \ \ \ \text{for}\ \ \
    \Bi  \geq \frac{1}{2} .  
  \end{equation*}
  Then the Nusselt number is bounded according to
  \begin{equation}
    \label{iq:NuRabndrig}
    \Nu \leq \frac{3\sqrt{6}}{4} \Ra^{1/2} ,
  \end{equation}
  independently of $\Bi$.
\end{thm}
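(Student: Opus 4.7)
The plan is to combine Lemmas \ref{lem:bndNudelta} and \ref{lem:deltasuff} case by case, exploiting the fact that the definitions of $\badm$ and $\dsuff$ in Lemma \ref{lem:deltasuff} have been engineered so that all three scaling regimes produce the same prefactor after the conversion from the control parameter $\R$ to $\Ra$. First, for each regime I would verify that the stated lower bound on $\R$ is precisely what is needed to ensure $\dsuff \leq 1/2$: in Case I, $\dsuff = 2(2/3)^{1/2}\R^{-1/2} \leq 1/2$ reduces to $\R \geq 32/3$; in Case III, $\dsuff^3 = (16/3)\Bi/(1+2\Bi)\R^{-1}$ gives $\R \geq (128/3)\Bi/(1+2\Bi)$; Case II requires no additional condition since $\R \geq \Rsuff = (8/3)\Bi^{-2}$ already forces $\dsuff \leq \Bi \leq 1/2$. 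With $\dsuff \leq 1/2$ in hand, Lemma \ref{lem:deltasuff} certifies that the piecewise linear background $\tau_{\dsuff}(z)$ is admissible at the balance parameter $b = \badm = \bproof(\Bi,\dsuff)$.

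Next, applying Lemma \ref{lem:bndNudelta} at $(\delta,b) = (\dsuff,\badm)$ yields
\begin{equation*}
  \Nu \leq \Nupwl{\Bi}(\dsuff,\badm) \leq \frac{3}{2\dsuff}, \qquad \dT \geq \BpwldT{\Bi}(\dsuff,\badm) \geq \frac{\badm - 1}{2}.
\end{equation*}
Using $\Ra = \R \dT \geq \R(\badm-1)/2$, I would invert to obtain an upper bound on $\R$ in terms of $\Ra$ and substitute into the estimate on $\Nu$ to eliminate $\R$. In Case I, $\badm = 3/2$ yields $\R \leq 4\Ra$ and $\Nu \leq (3\sqrt{6}/8)\R^{1/2} \leq (3\sqrt{6}/4)\Ra^{1/2}$. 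In Case II, $\badm - 1 = \dsuff/(2\Bi) = (3\Bi^2)^{-1/3}\R^{-1/3}$ gives $\Ra \geq (1/2)(3\Bi^2)^{-1/3}\R^{2/3}$; combining this with $\Nu \leq (3/4)(3/\Bi)^{1/3}\R^{1/3}$ causes all powers of $\Bi$ to cancel and leaves the same prefactor $3\sqrt{6}/4$. Case III is analogous, with the factor $(1+2\Bi)/\Bi$ appearing in both the $\Nu$ and $\Ra$ bounds and cancelling in the product.

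The argument is essentially algebraic bookkeeping once Lemmas \ref{lem:bndNudelta} and \ref{lem:deltasuff} are in hand, so there is no substantive analytic obstacle. The one delicate point, and the reason the theorem's conclusion is uniform in $\Bi$, is precisely the coincidence that the three case-by-case prefactors all agree on the value $3\sqrt{6}/4$; this is ultimately a reflection of how $\Rsuff$, $\badm$, and $\dsuff$ were defined in \eqref{eq:bdeltaBi} and Lemma \ref{lem:deltasuff} to match continuously across the regime boundaries $\dsuff = \Bi$ and $\Bi = 1/2$. Verifying this exact cancellation of the $\Bi$-dependent constants in Cases II and III is thus the only place where carelessness in the algebra would obscure the uniform bound.
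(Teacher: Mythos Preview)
Your proposal is correct and follows essentially the same approach as the paper's own proof: invoke Lemma~\ref{lem:deltasuff}(b) with $\delta = \dsuff$ and $b = \badm = \bproof(\Bi,\dsuff)$ (after checking that the hypothesis on $\R$ guarantees $\dsuff \leq 1/2$), feed the resulting admissible background into Lemma~\ref{lem:bndNudelta}, and then carry out the case-by-case algebra to eliminate $\R$ in favour of $\Ra$. Your explicit remark that Case~II needs no separate lower bound on $\R$ because $\R \geq \Rsuff(\Bi)$ already forces $\dsuff \leq \Bi \leq 1/2$ is exactly the observation the paper uses implicitly.
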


\begin{proof}
  We fix a Biot number $\Bi \in [0,\infty]$ and control parameter $\R
  > 0$, and define $\dsuff(\Bi,\R)$, $\badm(\Bi,\R)$ and $\Reff =
  \badm \R/(\badm - 1)$ as in Lemma~\ref{lem:deltasuff}; the
  constraint on $\R$ ensures that $\dsuff(\Bi,\R) \leq 1/2$.  Then
  using $\delta = \dsuff$, we apply the background method with the
  piecewise linear background $\tau_{\dsuff}$, which is admissible by
  Lemma~\ref{lem:deltasuff}(b), so that by
  \eqref{iq:betabnd}--\eqref{iq:Nubnd}, $\Nupwl{\Bi}(\dsuff,\badm)$
  and $\BpwldT{\Bi}(\dsuff,\badm)$ are rigorous bounds (upper and
  lower, respectively) for $\Nu$ and $\dT$.  From this theorem it also
  follows that $\badm$ coincides with the balance parameter $\bproof$
  \eqref{eq:bdeltaBi}, $\badm(\Bi,\R) = \bproof(\Bi,\dsuff(\Bi,\R))$,
  so that we may apply the results of Lemma~\ref{lem:bndNudelta}, and
  conclude that $\Nu \leq \Nupwl{\Bi}(\dsuff,\badm) \leq 3/2\dsuff$
  and $\dT \geq \BpwldT{\Bi}(\dsuff,\badm) \geq (\badm - 1)/2$, from
  which we may deduce a lower bound on $\Ra = \R \dT$.

  In evaluating \eqref{eq:NudTdbnd} at $\delta = \dsuff$, we consider
  the three cases in the definition of $\badm(\Bi,\R)$ and
  $\dsuff(\Bi,\R)$:
  \begin{itemize}
  \item[I:] For $0 \leq \Bi \leq 1/2$ and $\R < \Rsuff(\Bi) = (8/3)
    \Bi^{-2}$, in which case $\badm - 1 = 1/2$ and $\dsuff = 2(2/3)^{1/2}
    \R^{-1/2} > \Bi$: 
    \begin{equation*}
      \Nu \leq \frac{3}{2\dsuff} = \frac{3\sqrt{6}}{8} \R^{1/2} ,
      \qquad \Ra = \R \dT \geq \frac{1}{4} \R ,
    \end{equation*}
    and so $\Nu \leq \frac{3}{8}\sqrt{6} (4\Ra)^{1/2} = \frac{3}{4}
    \sqrt{6}\, \Ra^{1/2}$.

  \item[II:] For $0 < \Bi \leq 1/2$ and $\R \geq \Rsuff(\Bi)$, for
    which $\badm - 1 = \dsuff/2\Bi$ and $\dsuff = 2 (\Bi/3)^{1/3}
    \R^{-1/3}$: 
    \begin{equation*}
      \Nu \leq \frac{3}{2\dsuff} = \frac{3}{4} \left( \frac{3}{\Bi}
      \right)^{1/3} \R^{1/3} , \qquad \Ra = \R \dT \geq
      \frac{\dsuff}{4\Bi} \R = \frac{1}{2} \left( \frac{1}{3\Bi^2}
      \right)^{1/3} \R^{2/3} ,
    \end{equation*}
    which gives $\R^{1/3} \leq 2^{1/2} 3^{1/6} \Bi^{1/3} \Ra^{1/2}$
    and thus $\Nu \leq \frac{3}{4} \sqrt{6}\, \Ra^{1/2}$.

  \item[III:] For $1/2 < \Bi \leq \infty$ and $\R > 0$, for which
    $\badm - 1 = (1+2\Bi)\dsuff/4\Bi$ and $\dsuff = 2 ( 2/3)^{1/3} (
    \Bi/(1+2\Bi) )^{1/3} \R^{-1/3}$:
    \begin{equation*}
      \Nu \leq \frac{3}{2\dsuff} = \frac{3}{4} \left( \frac{3}{2}
        \frac{1+2\Bi}{\Bi} \right)^{1/3} \R^{1/3} , \qquad \Ra = \R
      \dT \geq \frac{1}{4} \left( \frac{2}{3} \right)^{1/3} \left(
        \frac{1+2\Bi}{\Bi} \right)^{2/3} \R^{2/3} ,
    \end{equation*}
    and so $\R^{1/3} \leq 2^{5/6} 3^{1/6} (\Bi/(1+2\Bi))^{1/3}
    \Ra^{1/2}$ and $\Nu \leq \frac{3}{4} \sqrt{6}\, \Ra^{1/2}$.
  \end{itemize}
\end{proof}

\begin{remark}
  The upper bound \eqref{iq:NuRabndrig} $\Nu \leq C_{\text{rig,unif}}
  \Ra^{1/2}$ on $\Nu(\Ra)$, valid uniformly in $\R$ (and hence $\Ra$)
  and in $\Bi \in [0,\infty]$, was obtained at the cost of weakening
  the prefactor $C_{\text{rig,unif}} = 3\sqrt{6}/4 \approx 1.837$
  relative to the asymptotic scaling \eqref{eq:Nubndrob} $\Nu \leq
  C_{\text{asym}} \Ra^{1/2}$, $C_{\text{asym}} = \sqrt{2/27} \approx
  0.272$ for $\Bi > 0$.  As in Lemma~\ref{lem:bndNudelta}, the
  prefactor $C = C_{\text{rig}}$ in the rigorous bound $\Nu \leq C
  \Ra^{1/2}$ may be improved for particular (intervals of) $\Bi$.  For
  instance, for fixed temperature BCs ($\Bi = 0$), \eqref{eq:Nubnddir}
  establishes the bound with $C = 3\sqrt{6}/32 \approx 0.230$; for
  fixed flux BCs ($\Bi = \infty$) one can prove that it is sufficient
  to take $C = \sqrt{5/54} \approx 0.304$; and uniformly for $1/2 \leq
  \Bi \leq \infty$, the use of the estimates \eqref{iq:NudTpwllargeBi}
  from Lemma~\ref{lem:bndNudelta} instead of \eqref{eq:NudTdbnd}
  immediately allows one to improve the prefactor in
  \eqref{iq:NuRabndrig} to $C = 5\sqrt{2}/12 \approx 0.589$.
\end{remark}


\end{document}